\documentclass{amsart}
\usepackage[margin=3cm]{geometry}

\usepackage{graphicx} %

\usepackage{algorithm}
\usepackage{algpseudocode}
\usepackage{xcolor} 
\usepackage{mathrsfs}
\usepackage{multirow}

\usepackage{cleveref} 

\newtheorem{thm}{Theorem}[section]

\newtheorem{lem}[thm]{Lemma}
\newtheorem{cor}[thm]{Corollary}

\theoremstyle{definition}
\newtheorem{defn}[thm]{Definition}

\newtheorem{eg}[thm]{Example}

\newcommand{\C}{\mathcal C}
\newcommand{\cL}{\mathcal L}
\newcommand{\LL}{\mathcal L}

\title{Phylogenetic network classes through the lens of expanding covers}
\author[Andrew Francis, Daniele Marchei, and Mike Steel]{Andrew Francis${}^1$, Daniele Marchei${}^{1,2}$, and Mike Steel$^3$ }
\address{${}^1$ Centre for Research in Mathematics and Data Science, Western Sydney University, Australia}
\address{${}^2$ Computer Science, University of Camerino, Camerino, Italy}
\address{${}^3$ Biomathematics Research Centre, University of Canterbury, New Zealand}
\date{\today}

\begin{document}

\begin{abstract}
    It was recently shown that a large class of phylogenetic networks, the `labellable' networks, is in bijection with the set of `expanding' covers of finite sets. In this paper, we show how several prominent classes of phylogenetic networks can be characterised purely in terms of properties of their associated covers. These classes include the tree-based, tree-child, orchard, tree-sibling, and normal networks.
\end{abstract}

\keywords{phylogenetic network, expanding cover, partition, algorithms, spanning tree, characterising network classes, encoding}

\maketitle

\section{Introduction}

Phylogenetic networks can provide more complete representations of evolutionary relationships among species than possible with a simple phylogenetic tree~\cite{bap13, hus19}. Although a single tree can accurately show ancestral speciation events (splitting of lineages), it cannot display reticulate evolution (where the flow of genomic information follows the merging of ancestral lineages).   Well-known reticulate processes in biology include hybridization, horizontal gene transfer, recombination, and endosymbiosis, in both the recent and distant past.   By contrast, rooted phylogenetic networks can explicitly and simultaneously display both speciation and reticulate evolution.  As a result, the mathematical and algorithmic investigation of phylogenetic networks has become a highly active field over the last $\sim$15 years, and numerous classes of networks have been defined and studied \cite{kong2022classes}.

In this paper, we show how a recently introduced correspondence for a large class of phylogenetic networks (the \emph{labellable} networks~\cite{francis2023labellable}) can be used to characterise a number of widely used other classes of network.  Classes of network have been introduced for a variety of reasons, but usually in order to capture some feature that seems biologically important, or because they are mathematically convenient.  Their definitions typically involve constraints on their structures as graphs.  For instance, tree-child networks are those for which no vertex has only reticulations as its children, whereas tree-based networks are those that can be constructed from a base tree by adding additional edges between the tree edges.  

The class of labellable networks contains many commonly studied classes.  They have been shown to correspond to a set of covers of finite sets that satisfy a property called ``expanding''.  We explore features of covers arising from networks, and characterise many of the familiar classes in terms of properties of their associated covers.  It is to be hoped that encoding network properties in the properties of sets of sets will enable some new directions to be pursued in studying phylogenetic networks.

This paper aims to demonstrate how this encoding of labellable networks into covers may be of broad use in the classification of network classes.   Different classes of networks are defined in different ways, and it can be difficult to present a clear hierarchy (there have been several visual attempts, for instance~\cite[Fig.12]{kong2022classes} and~\cite[Fig.6]{francis2023labellable}). Being able to characterise different network classes by the properties of their covers gives a unified framework for defining networks, in the sense that one may add or remove axioms depending on the class of networks one wants to describe. In that sense, moving from one class to another may be just a matter of changing the axioms, providing a potentially useful lens for visualizing the relationships among classes. 

We begin by defining what we mean by a phylogenetic network, recalling the key results linking labellable networks with expanding covers (from~\cite{francis2023labellable}), in \Cref{s:prelim}.  We give some general properties of covers arising from networks, before characterising the classes of tree-based labellable networks (\Cref{s:tbn}), then tree-child networks (\Cref{s:tree-child}), normal networks (\Cref{s:normal}), tree-sibling networks (\Cref{sec:treesibling}),  and orchard networks (\Cref{s:orchard}). These are some of the more widely seen classes, and they are amenable to being described in terms of covers.  We also demonstrate how the language of covers can allow one to define new classes of network by changing the constraints on the covers:  one small change to the constraints defines a new class we call `spinal' networks, that have an interesting structure (\Cref{s:new.classes}).  We finish by discussing some open questions and opportunities for further development.

\section{Preliminaries}\label{s:prelim}

A \emph{phylogenetic network} on $n$ leaves is a directed acyclic graph with a single vertex of in-degree zero, called the root, and $n$ vertices of in-degree 1 and out-degree zero, labelled by $[n]:=\{1,\dots,n\}$. Note that this includes the possibility of vertices that have in-degree and out-degree both equal to 1, or both strictly greater than 1; such vertices are called \emph{degenerate}.  If $N$ has any degenerate vertices, it is said to be a \emph{degenerate} network; otherwise, it is \emph{non-degenerate}.  

If every vertex has in-degree and out-degree at most 2, then the network is said to be \emph{binary}.  If $N$ is non-degenerate and binary, then all vertices other than the leaves and root have total degree 3.

Vertices in a network that have in-degree 1 are called \emph{tree vertices}, and those with in-degree greater than 1 are called \emph{reticulate vertices}, or \emph{reticulations}.  We will typically use $k$ to denote the number of reticulations in a network, and $m$ to denote the number of non-root vertices in total.

A \emph{labellable} phylogenetic network is one whose vertices can be deterministically labelled according to an algorithm that generalises one for trees (the algorithm for trees is due to Erd\H{o}s and Sz\'ekely~\cite{erdos1989applications})~\cite{francis2023labellable}.  Such networks are characterized topologically by the property that the map from non-leaf vertices to their sets of children is one-to-one~\cite[Thm.3.3]{francis2023labellable}.  

A \emph{partition} of a finite set $A$ is a set of non-empty, pairwise disjoint subsets of $A$ whose union is $A$.  A \emph{cover} of a finite set $A$ is a set of non-empty subsets of $A$ whose union is $A$.  The cardinality $|\C|$ of a cover $\C$ is the number of sets it contains.  We use $||\C||$ to denote the number of distinct elements in the sets in $\C$, that is, $||\C||:=|\bigcup_{C_i\in\C}C_i|$.

Recall the definition from~\cite{francis2023labellable}:
\begin{defn}\label{d:expanding.cover}
A cover $\C$ of $[m]$ is \emph{expanding} if, for $n=m-|\C|+1$, it satisfies:
\begin{enumerate}
	\item No element of $[n]$ appears more than once, and
	\item For $i=1,\dots,|\C|$, the cover contains at least $i$ subsets of $[n+i-1]$.
\end{enumerate}
\end{defn}

\begin{thm}\cite[Thm. 4.4]{francis2023labellable}
    The class of labellable phylogenetic networks is in bijection with the collection of expanding covers of finite sets.
\end{thm}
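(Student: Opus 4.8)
The plan is to exhibit explicit maps in both directions and to prove they are mutually inverse. For the forward map $\Phi$, given a labellable network $N$ with $m$ non-root vertices and $n$ leaves, I would run the deterministic generalisation of the Erd\H{o}s--Sz\'ekely labelling: the $n$ leaves keep their labels $1,\dots,n$, and then the procedure repeatedly assigns the next unused integer to a non-leaf vertex all of whose children are already labelled, breaking ties by a fixed rule to stay deterministic. This labels the non-leaf vertices with $n+1,\dots,m+1$, the root receiving $m+1$. I then set $\Phi(N):=\{S_v : v \text{ non-leaf}\}$, where $S_v$ is the set of labels of the children of $v$. Two counting observations get the scaffolding in place: every non-root vertex is some vertex's child and is never labelled above $m$, so $\bigcup_v S_v=[m]$ and $\C$ covers $[m]$; and because $N$ is labellable the map $v\mapsto S_v$ is injective, so $|\C|$ equals the number of non-leaf vertices, which is $m-n+1$, recovering $n=m-|\C|+1$.

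Next I would check that $\Phi(N)$ is expanding. Condition (1) is immediate from the definition of a phylogenetic network: each leaf has in-degree $1$, so each label in $[n]$ occurs in exactly one $S_v$. For condition (2), note that the sets contained in $[n+i-1]$ are precisely those whose children were all labelled before the integer $n+i$ was assigned; these are exactly the vertices that were available to receive one of the first $i$ non-leaf labels. Since the algorithm succeeds in assigning $n+1,\dots,n+i$, there must be at least $i$ such vertices, hence at least $i$ such subsets. In other words, condition (2) is nothing but the assertion that at each stage of the labelling there is an unlabelled non-leaf vertex whose children are all labelled.

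For the reverse map $\Psi$, given an expanding cover $\C$ of $[m]$ I would reconstruct a network. Put $n=m-|\C|+1$ and declare $1,\dots,n$ the leaves. Order the sets by their maximum element as $S_1,\dots,S_{|\C|}$ (with a fixed secondary tie-break); condition (2) is exactly equivalent to $\max S_i\le n+i-1$ for all $i$, i.e.\ $S_i\subseteq[n+i-1]$, so this ordering is legal. Create a vertex labelled $n+i$ owning $S_i$ and draw an edge from $n+i$ to each element of $S_i$. Every edge then points from a larger label to a smaller one, so $\Psi(\C)$ is acyclic; the label $m+1$ never appears inside any set, making that vertex the unique source (root); each leaf owns no set, so has out-degree $0$, and by condition (1) has in-degree $1$; and the sets being distinct makes the children map injective, so $\Psi(\C)$ is labellable. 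Thus $\Psi(\C)$ is a labellable phylogenetic network.

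Finally I would verify that $\Phi$ and $\Psi$ are mutually inverse, and this is the step I expect to be the main obstacle. The substantive point is a rigidity statement: the canonical ``sort by maximum element'' order used to build $\Psi(\C)$ must coincide with the order --- and the actual labels --- that the greedy labelling would assign to $\Psi(\C)$, so that $\Phi(\Psi(\C))=\C$ with labels intact, and symmetrically that $\Psi(\Phi(N))=N$. Making this airtight requires showing that the expanding axioms pin the labelling down uniquely: that the ``$S_i\subseteq[n+i-1]$'' constraint together with the tie-break forces exactly one assignment of labels to vertices compatible with both constructions, so that no information is shuffled in passing from network to cover and back. Once this uniqueness is established, the two counting identities and the axiom-by-axiom dictionary (in-degree-one leaves $\leftrightarrow$ condition (1), solvability of each greedy step $\leftrightarrow$ condition (2)) assemble into the claimed bijection.
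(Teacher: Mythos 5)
First, a point of reference: the paper does not prove this statement at all --- it is quoted from \cite{francis2023labellable} --- so the only material to compare against is the construction sketched immediately after the theorem and the labelling order of \Cref{d:labelling.order}. Your forward map $\Phi$ agrees with that construction, and your verification that $\Phi(N)$ satisfies both expanding conditions (leaves have in-degree $1$ for condition (1); the vertices receiving labels $n+1,\dots,n+i$ already have their children sets inside $[n+i-1]$ for condition (2)) is sound. The genuine gap is in the reverse map and the mutual-inverse step. You correctly identify this as the main obstacle, but the claim you intend to rest it on --- that ``the expanding axioms pin the labelling down uniquely'' --- is false. The labels of the internal vertices are determined by the tie-break rule, not by the axioms, and because those labels reappear as \emph{elements} of other sets in the cover, different tie-breaks yield non-isomorphic networks from the same cover (and different covers from the same network). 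Your $\Psi$ orders the sets by maximum element, whereas the canonical construction (\Cref{d:labelling.order}, following \cite{francis2023labellable}) takes at each step the lexicographically minimal available set; these two rules genuinely disagree, so as defined $\Phi$ and $\Psi$ are not mutually inverse.

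A concrete counterexample: let $\C=\{\{2\},\{1,5\},\{3,6\},\{4,7\},\{8,9\}\}$, an expanding cover with $m=9$, $|\C|=5$, $n=5$. Sorting by maximum gives the order $\{2\},\{1,5\},\{3,6\},\{4,7\},\{8,9\}$ with labels $6,7,8,9,\rho$, producing a tree in which the vertex labelled $8$ has the leaves $2$ and $3$ below it and the vertex labelled $9$ has $1,4,5$ below it. The labelling order of \Cref{d:labelling.order} instead assigns $6$ to $\{1,5\}$ and $7$ to $\{2\}$, producing a tree in which $8$ has $1,3,5$ below it and $9$ has $2,4$ below it. These are non-isomorphic leaf-labelled trees, and only the second maps back to $\C$ under the standard labelling algorithm; applying that algorithm to your $\Psi(\C)$ returns $\{\{1,5\},\{2\},\{3,7\},\{4,6\},\{8,9\}\}\neq\C$. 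The approach can be repaired --- either adopt the lexicographic rule in both directions, or adopt a max-first rule in both directions --- but in either case the consistency must be proved by induction on the labelling step (showing that the $i$th set processed by $\Psi$ is exactly the children set of the vertex that receives label $n+i$ when $\Phi$ is run on $\Psi(\C)$). That induction is the substantive content of the cited theorem, and it is absent from the proposal.
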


The map from a labellable phylogenetic network to its expanding cover takes each non-leaf vertex to the set of labels of its children.  That is, sets in the cover are sets of labels of sibling vertices sharing a parent.  The map from an expanding cover $\C$ to a labellable network is a constructive map that first establishes the number of leaves in the network via the following formula~\cite[Lemma~4.1]{francis2023labellable}: 
\[n=||\C||-|\C|+1.\]
The construction of the network then begins with $n$ isolated leaf vertices, and adds parent vertices to sets of vertices present in the growing network, and lexicographically minimal of those in $\C$.  The expanding conditions ensure that there is always such a set, and that the map is well-defined. For examples of this construction the reader is referred to~\cite{francis2023labellable}.

\begin{figure}[ht]
    \includegraphics[]{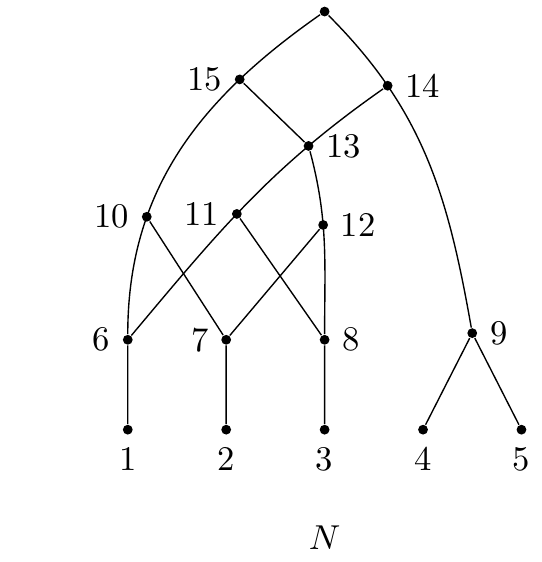}
    \caption{A labellable phylogenetic network $N$ with cover $1\mid 2\mid 3\mid 4,5\mid 6,8\mid 6,7\mid 7,8\mid 11,12\mid 9,13\mid 10,13\mid 14,15$.}
    \label{f:network}
\end{figure}

While the condition for a cover to be expanding may seem artificial, and it certainly restricts from the collection of all covers of a set, it can be seen as a natural extension of the notion of partitions.  In particular, it turns out that all partitions are expanding covers.

\begin{lem}\label{l:partitions.expand}
Every set partition is an expanding cover.
\end{lem}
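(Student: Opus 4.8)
The plan is to verify directly that an arbitrary partition meets both clauses of \Cref{d:expanding.cover}. Write $\C = \{C_1, \dots, C_k\}$ for a partition of $[m]$ into $k = |\C|$ parts, so the relevant threshold in the definition is $n = m - k + 1$, equivalently $m = n + k - 1$. Condition~(1) I would dispatch immediately: in a partition every element of $[m]$ lies in exactly one part, so \emph{a fortiori} no element of $[n] \subseteq [m]$ appears more than once.

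The real content is condition~(2), which I would prove by a counting argument that leans on disjointness. Fix $i \in \{1, \dots, k\}$. The elements of $[m]$ that strictly exceed $n+i-1$ are exactly those in $\{n+i, \dots, m\}$, and since $m = n+k-1$ there are precisely $k-i$ of them. The observation driving the argument is that a part $C_j$ fails to be a subset of $[n+i-1]$ if and only if it contains at least one element of $\{n+i, \dots, m\}$.

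From here the bound falls out by pigeonhole. Because the parts are pairwise disjoint, each of the $k-i$ elements of $\{n+i, \dots, m\}$ belongs to at most one part, so the number of parts meeting this set --- that is, the number of parts that are \emph{not} subsets of $[n+i-1]$ --- is at most $k-i$. Hence at least $k - (k-i) = i$ parts are contained in $[n+i-1]$, which is exactly condition~(2). I do not expect a genuine obstacle: the only point needing care is the indexing that converts $m = n+k-1$ into the count of exactly $k-i$ ``overflow'' elements, which is precisely what makes the final bound come out to $i$ rather than something weaker.
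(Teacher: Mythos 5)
Your proof is correct, and the underlying idea is the same pigeonhole count that drives the paper's argument: pairwise disjointness forces the number of parts meeting the ``overflow'' set $\{n+i,\dots,m\}$ to be at most its cardinality $k-i$, leaving at least $i$ parts inside $[n+i-1]$. The only structural difference is that the paper packages this as an induction on $i$ (establishing the $i=1$ case and then, at each step, removing the $k$ subsets already found and re-running the same count to produce one more), whereas you apply the count directly and uniformly for every $i$. Your version is a genuine streamlining: the induction in the paper's proof is not doing any real work beyond repeating the base-case computation, and your direct argument makes it clear that condition~(2) of \Cref{d:expanding.cover} holds for each $i$ independently. One small point worth making explicit is the injectivity underlying the bound: a part that is not contained in $[n+i-1]$ contains some element of $\{n+i,\dots,m\}$, and since each such element lies in exactly one part, distinct failing parts are witnessed by distinct overflow elements --- this is the step where being a partition (rather than merely a cover) is used, and you have correctly identified it as the crux.
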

\begin{proof}
Let $\pi$ be a partition of $[m]$ with $\ell=|\pi|$ blocks, and set $n=m-\ell+1$.
Two conditions define an expanding cover.  The first is that elements of $\{1,\dots,n\}$ are not repeated in $\pi$, which is satisfied by virtue of $\pi$ being a partition.  The second is that for each $i=1,\dots,\ell$, $\pi$ contains at least $i$ subsets of $[n+i-1]$, and we prove this by induction on $i$.

First, consider the base case $i=1$. We need to show that there is at least one set in $\pi$ that is a subset of $[n]$. There are $\ell=m-n+1$ pairwise disjoint subsets of $[m]$ in $\pi$, and there are $m-n$ integers in $[m]$ that are not in $[n]$.  Therefore, there must be at least one set in $\pi$ that does not contain an element of $\{n+1,\dots,m\}$ and is thus in $[n]$, as required.

Suppose that for $i=k$, $\pi$ contains at least $k$ subsets of $[n+k-1]$.  We would like to show that $\pi$ contains at least $k+1$ subsets of $[n+(k+1)-1]=[n+k]$.  The proof proceeds in the same manner as the case of $i=1$.

First remove $k$ subsets of $[n+k-1]$ from $\pi$, so that $\pi$ has $\ell-k$ sets remaining.  We need to show at least one remaining set is entirely contained within $[n+k]$.  There are $m-(n+k)$ integers in $\pi$ that are \emph{not} in $[n+k]$, and 
$\ell-k=(m-n+1)-k=m-(n+k)+1$
sets are available.  Therefore, at least one must not contain any element outside $[n+k]$, as required.
\end{proof}

Since all set partitions are expanding covers, we can ask what sort of networks have partitions as their covers.  A partition has a single occurrence of each integer, which means that each vertex of the network (each label) has a single set of siblings.  In other words, the network has no reticulations, and thus is a tree.  This correspondence of trees with partitions allows trees with degenerate vertices (i.e., vertices with in-degree and out-degree 1).  In this way, the correspondence for partitions is closer to the result of Erd\H os and Sz\'ekely~\cite{erdos1989applications} than the non-degenerate framework that has partitions in bijection with phylogenetic forests in~\cite{francis2022brauer}. 

The lexicographic order on sets (given by $A\prec B$ if $A\subset B$ or $\min(A\setminus B)<\min(B\setminus A)$) that helps determine the labelling sequence is not always the ordering of sets used to label the internal vertices of the network; that sequence is given by the \emph{labelling order}, which is defined as follows~\cite[Section 4]{francis2023labellable}:

\begin{defn}\label{d:labelling.order}
    The \emph{labelling order} for an expanding cover $\C$ is determined by the following procedure. 
    \begin{enumerate}
        \item For $i=1,\dots,|\C|$, 
        \begin{enumerate}
            \item Set $C_i$ to be the minimal set in $(\C, \prec)$ contained in $[n + i - 1]$; and
            \item Redefine $\C=\C\setminus\{C_i\}$.
        \end{enumerate}
        \item Output the sequence $C_1,\dots,C_{|\C|}$.
    \end{enumerate}
\end{defn}

This order is necessary to establish conditions on a cover that give non-degenerate networks, for instance, and we will use it later in the present paper to describe \emph{normal networks} (in~\Cref{s:normal}) and \emph{orchard networks} (\Cref{s:orchard}).

Given a cover in labelling order, we can label every subset in position $1 \leq i < |\C|$ by $i + n$, whereas the last subset is labelled $\rho$ for the \emph{root}. In this way, the label for each subset corresponds to the label of its parent in the corresponding labellable network.

For example, the labelling order for the network shown in~\Cref{f:network} is 
\[1\mid 2\mid 3\mid 4,5\mid 6,7\mid 6,8\mid 7,8\mid 11,12\mid 9,13\mid 10,13\mid 14,15.\]
The first set gives rise to the vertex label $n+1=6$, the second gives rise to $7$, and so on. We can represent this more explicitly as follows, adding $\rho$ to denote the root:
\[\{1\}_6, \{2\}_7, \{3\}_8, \{4,5\}_9, \{6,7\}_{10}, \{6,8\}_{11}, \{7,8\}_{12}, \{11,12\}_{13}, \{9,13\}_{14}, \{10,13\}_{15},\{14,15\}_\rho.\]

\subsection{Features of vertices in networks and their covers' properties}
\label{s:cover.properties}

Many features of vertices in networks have direct translations into the language of covers, and we present some of them in~\Cref{tab:dictionary}.  The first two lines of the table are clear: non-root vertices on a network are labelled by the labelling algorithm and those labels appear as integers in $[m]$, and the leaves are labelled by integers in $[n]$.  The other lines of the table can be justified as follows. 

A tree vertex in a network is a vertex with in-degree 1, which means it has only one parent and, therefore, is in only one set of sibling vertices.  This set of sibling vertices could have any size greater than or equal to one, but it is only a single set.  A reticulation vertex, on the other hand, has strictly more than one parent, and thus has two or more sets of siblings. No two vertices in a labellable network have the same set of children~\cite[Thm~3.3]{francis2023labellable}, so the label of a reticulation vertex will appear in at least two sets in the cover.
The other translations in~\Cref{tab:dictionary} follow immediately.

Throughout this paper, we will add additional translations to the table, with a summary table given in the Discussion.

\begin{table}[ht]
    \centering
    \begin{tabular}{lp{10cm}}
         \hline
         \textbf{Network} & \textbf{Cover}  \\ 
         \hline
         Non-root vertex & An integer in $[m]$ \\
         Leaf & An integer in $[n]$ \\
         Tree vertex & An integer contained in just one subset \\
         Reticulation vertex & An integer contained in more than one subset \\
         In-degree of $x$ & The number of subsets that contain $x$ \\
         Out-degree of $x$ & Size of the subset with label $x$ in the labelling order \\
         Parents of $x$ & All the subsets that contain $x$ \\
         Siblings of $x$ & All the other integers contained in the subsets that contain $x$ \\
         Children of $x$ & The subset with label $x$ in the labelling order\\
         \hline\\
    \end{tabular}
    \caption{A translation of features of vertices in a network with $n$ leaves and $m$ non-root vertices into features of the corresponding expanding cover.  
}
    \label{tab:dictionary}
\end{table}

\section{Tree-based networks}\label{s:tbn}

A phylogenetic network is \emph{tree-based} if it has a spanning tree whose leaves are those of the network~\cite{francis2015phylogenetic}.  Such a spanning tree is called a \emph{base tree} for the network. Typically, a tree-based network can have many base trees.  A similar notion that we will discuss is that of a \emph{support tree} for a network.  A support tree is a base tree but with additional degree 2 vertices where additional arcs are joined to complete the network.  That is, the set of vertices in the support tree and the network are identical.

Unlike the other classes that we consider in the coming sections, not all tree-based networks are labellable, but neither are all labellable networks tree-based~\cite{francis2023labellable}.  There is thus a non-trivial intersection of the two classes, and this intersection contains many other classes, including orchard, tree-child, and normal networks~\cite{francis2023labellable}.  In the binary case, the tree-based networks that are labellable can be characterised in terms of their structural properties, as those for which no two reticulate vertices have the same sets of parents~\cite[Thm.~6.3]{francis2023labellable}.  In this section, we provide a new characterisation of the tree-based labellable networks in terms of their covers, and the existence of an ``embedded'' partition, in~\Cref{t:treebased.embedded.partitions}. 

We say that a partition $\pi$ \emph{embeds} in $\C$ if there is a one-to-one map from $\pi$ to $\C$ that maps each set $A$ in $\pi$ to a set $A'$ in $\C$ so that $A\subseteq A'$.  A partition $\pi$ \emph{fully embeds} in a cover $\C$ if $\pi$ embeds in $\C$ and $|\pi|=|\C|$.

Recall from~\Cref{s:prelim} that every partition of $[m]$ is an expanding cover.  It is straightforward to see that every expanding cover has a partition that embeds into it, as follows.

\begin{lem}\label{l:expanding.partition.embedding}
Every expanding cover of $[m]$ has an embedded partition of $[m]$. 
\end{lem}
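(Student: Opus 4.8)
The plan is to observe that this follows directly from the covering property alone, and does not actually require the full strength of the expanding conditions. The key idea is to exploit the fact that every element of $[m]$ lies in at least one set of $\C$ in order to assign each element a ``home'' set, and then to read off the desired partition as the collection of fibres of that assignment.

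Concretely, I would first fix, for each $x \in [m]$, a set $f(x) \in \C$ with $x \in f(x)$; such a set exists because $\bigcup_{C \in \C} C = [m]$ by the definition of a cover (for definiteness one can take $f(x)$ to be the $\prec$-minimal set of $\C$ containing $x$, but any choice works). Next I would define $\pi$ to be the collection of non-empty fibres $f^{-1}(C) = \{x \in [m] : f(x) = C\}$, as $C$ ranges over the image of $f$.

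It then remains to check the two requirements in the definition of embedding. First, $\pi$ is a partition of $[m]$: the non-empty fibres of a function are pairwise disjoint, and their union is the whole domain $[m]$. Second, the assignment sending each block $f^{-1}(C)$ to the set $C$ is a one-to-one map from $\pi$ into $\C$ meeting the containment condition, since $f^{-1}(C) \subseteq C$ by construction, while distinct blocks are indexed by distinct image sets and hence map to distinct members of $\C$.

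There is essentially no obstacle here; the only subtlety worth flagging is that the expanding hypothesis plays no role in this particular statement, since the same argument shows that \emph{every} cover of $[m]$ admits an embedded partition. The expanding conditions become relevant only for the finer statements later on, for instance when one asks for a \emph{fully} embedding partition in the tree-based characterisation of~\Cref{t:treebased.embedded.partitions}.
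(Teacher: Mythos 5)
Your proof is correct and is essentially the same argument as the paper's: the paper simply says to delete all repeated occurrences of each integer so that one occurrence remains, which is exactly your choice function $f$ and its fibres written less formally. Your observation that the expanding hypothesis is not needed is also accurate, as the paper's one-line proof likewise uses only the covering property.
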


\begin{proof}
If all repeats of integers are deleted, so that there is one occurrence of each integer, then the result is a partition of $[m]$.  
\end{proof}

Any partition obtained in this way will be expanding, according to  Lemma~\ref{l:partitions.expand}.  Note, however, that each such partition may not have the same number of sets as the cover, and therefore may be expanding for a different value of $n$.

The notion of embedding a partition into a cover turns out to help characterise tree-based networks.

\begin{thm}\label{t:treebased.embedded.partitions}
An expanding cover $\C$ of $[m]$ corresponds to a tree-based network if and only if it has a fully embedded partition $\pi$ of $[m]$. 
\end{thm}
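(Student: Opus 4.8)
The plan is to translate the graph-theoretic definition of \emph{tree-based} directly into a statement about selecting one element of the cover for each integer, and then to recognise that selection as a fully embedded partition. Using the dictionary of \Cref{tab:dictionary}, recall that the parents of a vertex $x$ are exactly the subsets of $\C$ containing $x$, and the children of the vertex labelled $C$ are the elements of $C$. A base tree is an arborescence rooted at $\rho$ that spans $N$ and whose leaves are precisely the leaves of $N$. Since $N$ is a rooted acyclic graph with unique source $\rho$, the first step is to observe that choosing, for every non-root vertex $x$, exactly one of its parents always yields a spanning arborescence (acyclicity is inherited from $N$, and following the chosen parent-pointers from any vertex must terminate at $\rho$). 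Under this observation, the only extra requirement for such an arborescence to be a \emph{base} tree is that no internal vertex of $N$ becomes a leaf, i.e. that every non-leaf vertex retains at least one child.

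Translating to the cover, a choice of one parent for each $x\in[m]$ is a function $f\colon [m]\to\C$ with $x\in f(x)$, and ``every internal vertex retains a child'' becomes the requirement that $f$ be surjective. I would then set up the correspondence in both directions. Given a base tree, define $f$ as above and take the fibres $f^{-1}(C)$ for $C\in\C$; these are non-empty (surjectivity), pairwise disjoint, cover $[m]$, and satisfy $f^{-1}(C)\subseteq C$, so $\{f^{-1}(C):C\in\C\}$ is a partition of $[m]$ that embeds in $\C$ via $f^{-1}(C)\mapsto C$, with exactly $|\C|$ blocks---a full embedding. Conversely, given a fully embedded partition $\pi$ with its (necessarily bijective) embedding map $g\colon\pi\to\C$, define $f(x)=g(A)$, where $A$ is the unique block of $\pi$ containing $x$; then $x\in A\subseteq g(A)=f(x)$, and surjectivity of $f$ follows from $g$ being onto together with each block being non-empty. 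Selecting the corresponding parent edges then produces a base tree.

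I expect the main obstacle to be the two structural checks that flank the clean set-theoretic bijection. First, one must justify rigorously that a single-parent selection in a rooted acyclic graph is genuinely a spanning arborescence with the correct root, so that ``spanning tree'' reduces \emph{exactly} to ``one retained parent per non-root vertex''. Second, one must pin down that the leaf condition for the base tree is equivalent, with no slack, to surjectivity of $f$---equivalently, to the partition having exactly $|\C|$ blocks rather than merely embedding. The equality $|\pi|=|\C|$ in the definition of \emph{fully} embeds is doing the real work here: dropping it (a mere embedding, as guaranteed for every expanding cover in \Cref{l:expanding.partition.embedding}) corresponds precisely to permitting some internal vertex to lose all of its children, which is exactly what the leaf condition must forbid, so I would make a point of highlighting this contrast in establishing \Cref{t:treebased.embedded.partitions}.
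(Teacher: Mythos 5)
Your argument is correct and is essentially the paper's own proof in lightly different clothing: the paper passes explicitly through a support tree of $N$ and reads off its cover with inherited labels, while you package the same step as a surjective parent-selection function $f\colon[m]\to\C$ whose fibres are the blocks of the partition (this is exactly the observation the paper records separately as \Cref{l:support.trees.partitions}). Your identification of the equality $|\pi|=|\C|$ with the requirement that no internal vertex lose all its children matches the paper's handling of the leaf condition, so there is nothing substantive to add.
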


\begin{proof}
Suppose $N$ is a tree-based network with expanding cover $\C$ of $[m]$.  We will show that $\C$ has an embedded partition with length $|\C|$.

Label the vertices of $N$ according to the labelling algorithm.  This labelling gives rise to the expanding cover whose sets are the children of non-leaf vertices in $N$. Choose a support tree $T$ for $N$, keeping the labels of the vertices from $N$.  The labels of vertices in $T$ are thus precisely $[m]$. Note that all vertices of $N$ are present in $T$, but that each non-root vertex in $T$ has in-degree 1.  The set of children of each vertex in $T$ is a subset of the set of children for the corresponding vertex in $N$.  

Construct the cover for $T$ using the inherited labelling of vertices, forming sets of labels of vertices that are the children of the same non-leaf vertex. Each set thus formed is a subset of one of the sets in the cover for $N$, because the children of vertex $i$ in $N$ are a subset of the children of vertex $i$ in $T$.  Each set is non-empty because the only leaves in the base tree are those of $N$.  The cover for $T$ contains no repeated integers because $T$ is a tree and there are no vertices with in-degree greater than 1.  Thus, the cover for $T$ with the labelling inherited from $N$ is a partition of $[m]$ of length $|\C|$, as desired.

Note that the labels on the vertices in $T$ are those inherited from $N$.  They are not the same as the labels that would be put on vertices by the labelling algorithm applied to $T$. Thus the partition obtained from $T$ is not the same as the partition that would be obtained by labelling $T$ directly.

For the reverse direction, suppose that the expanding cover $\C$ has an embedded partition $\pi$ with length $|\C|$.  We will show that the corresponding network is tree-based.

Let $N$ be the network constructed by using $\C$.  The partition $\pi$ embeds in $\C$, so there is a one-to-one map from $\pi$ to $\C$ that maps each set $A$ in $\pi$ to a set $A'$ in $\C$ such that $A\subseteq A'$.  The sets in $\C$ correspond to vertices in $N$ and give the set of children of each vertex.  For each non-leaf vertex in $N$,  $A'\in\C$ labels its children, and there is a corresponding set $A\in\pi$ that is its pre-image in the embedding of $\pi$ into $\C$, with $A\subseteq A'$.  

For the non-leaf vertex in $N$ with children $A'$, delete the edges in $N$ between it and the vertices labelled by $A'\setminus A$, and repeat this for each non-leaf vertex in $N$.   The resulting network now has vertices whose children are labelled by the sets in $\pi$.  We claim that this resulting network $\hat N$ is a support tree for $N.$  We need to show that $\hat N$ is a spanning tree whose leaves are those of $N$.

First, $\hat N$ contains all vertices of $N$, since only edges were removed.  Second, it is a tree, since no label is repeated in $\pi$ by virtue of it being a partition, and therefore no vertex has more than one parent.  Third, each vertex $v$ that is not a leaf of $N$ has at least one child, since $v$ has a non-empty set of children whose labels are a set in $\pi$ (the length of $\pi$ is $|\C|$), and thus the only leaves of $\hat N$ are those of $N$.  

Thus, $\hat N$ is a support tree for $N$, and so $N$ is tree-based, as required.
\end{proof}

This result gives an alternative way to characterise support trees for a tree-based network, as follows.  

\begin{cor}\label{l:support.trees.partitions}
The set of support trees for a tree-based network $N$ is in bijection with the set of full embeddings of partitions in the expanding cover for $N$.    
\end{cor}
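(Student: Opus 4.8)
The plan is to observe that the two directions of the proof of \Cref{t:treebased.embedded.partitions} are not merely existence arguments but explicit constructions, and to show that they are mutually inverse. Write $\Phi$ for the map sending a support tree to a full embedding, and $\Psi$ for the map sending a full embedding to a support tree. Given a support tree $T$ of $N$, the map $\Phi$ retains the vertex labels inherited from $N$, forms the partition $\pi_T$ of $[m]$ whose blocks are the children-sets of the non-leaf vertices of $T$, and records the embedding $\phi_T\colon\pi_T\to\C$ that sends the block of children of a vertex $v$ in $T$ to the set of children of $v$ in $N$. Conversely, given a full embedding $(\pi,\phi)$, the map $\Psi$ produces the network $\hat N$ obtained by deleting, at the non-leaf vertex of $N$ whose children-set is $A'\in\C$, the edges to the vertices labelled by $A'\setminus\phi^{-1}(A')$. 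The forward and reverse directions of \Cref{t:treebased.embedded.partitions} already establish that $\Phi(T)$ is a full embedding and that $\Psi(\pi,\phi)$ is a support tree, so both maps are well defined; it remains only to check that each composite is the identity.

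The key structural point underpinning both checks is that the two sides are indexed by the same objects. Because $N$ is labellable, no two non-leaf vertices share a children-set, so the sets of $\C$ are in one-to-one correspondence with the non-leaf vertices of $N$; and since a support tree has exactly the same vertices as $N$ with the same leaves, its non-leaf vertices coincide with those of $N$. Thus the blocks of $\pi_T$, the sets of $\C$, and the non-leaf vertices of $N$ are all matched up by the common labelling, and $\phi_T$ is precisely this matching. I would make this correspondence explicit at the outset, as it is what makes the two composites transparent.

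For $\Psi\circ\Phi$, I would start from a support tree $T$, form $(\pi_T,\phi_T)$, and note that applying $\Psi$ deletes at each vertex $v$ exactly those $N$-children that are not $T$-children (namely $A'\setminus A$, where $A'$ and $A$ are the children of $v$ in $N$ and in $T$ respectively), thereby returning $T$. For $\Phi\circ\Psi$, I would start from a full embedding $(\pi,\phi)$, observe that by construction the children-set of each non-leaf vertex in $\hat N$ is exactly the block $\phi^{-1}(A')$, so that the partition read off from $\hat N$ is $\pi$ and its recorded embedding is again $\phi$. The main obstacle is not any hard estimate but careful bookkeeping: one must treat a full embedding as the data of the map $\phi$ rather than of the underlying partition alone, since a single partition may admit several full embeddings into $\C$ (each corresponding to a distinct support tree), and verify that the map component is faithfully recovered by $\Phi\circ\Psi$. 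Tracking everything through the fixed vertex labels of $N$ keeps this accounting unambiguous.
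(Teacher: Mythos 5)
Your proposal is correct and takes essentially the same approach as the paper: both directions rely on the explicit constructions in the proof of \Cref{t:treebased.embedded.partitions}, with the paper merely summarising (via "a full embedding constitutes a choice of parent for each reticulation") what you verify in more detail by checking the two composites are identities. Your explicit insistence that a full embedding is the data of the map $\phi$ rather than the partition alone is a worthwhile clarification, consistent with the paper's remark that one partition can embed in several ways.
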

\begin{proof}
As seen in the proof of~\Cref{t:treebased.embedded.partitions}, each support tree for $N$ gives rise to a full embedding of a partition in the cover for $N$.  Conversely, every full embedding of a partition into the cover for $N$ constitutes a choice of parent for each reticulation vertex (any element that appears more than once in the cover), and thus gives a support tree for $N$. 
\end{proof} 

Note that it is possible for a particular partition to embed in more than one way into a cover, and that each such embedding gives a different support tree for the network.

\begin{eg}\label{eg:embedded.partitions}
    \Cref{f:network} shows a network with cover $\C=1\mid 2\mid 3\mid 4,5\mid 6,8\mid 6,7\mid 7,8\mid 11,12\mid 9,13\mid 10,13\mid 14,15$. The embeddings of partitions into $\C$ can be enumerated as follows.  First, consider the elements that appear exactly once in $\C$: $1,2,3,4,5,9,10,11,12,14,15$.  These must appear in the partition where they are in the cover (one appearance means only one possibility), so any embedded partition into $\C$ has form \[1\mid 2\mid 3\mid 4,5\mid {\_},\_\mid \_,\_\mid \_,\_\mid 11,12\mid 9,\_\mid 10,\_\mid 14,15.\]
    Consider then the integer 6, which, in the partition, must be either embedded into the set $\{6,7\}$ or $\{6,8\}$. If the former, then $8$ must embed into the latter; otherwise, the partition would not be a full embedding (we cannot allow empty sets), which forces 7 to embed into the set $\{7,8\}$.  In short, the three sets $6,8\mid 6,7\mid 7,8$ can only have embedded either $6\mid 7\mid 8$ or $8\mid 6\mid 7$.  These amount to the same partition but two distinct embeddings that give different support trees because they correspond to different choices of child for each vertex.  The other choice for embedding a partition involves the placement of $13$, which can either be with 9 or 10.

    Thus, there are four full embeddings of partitions $\pi_i$ into $\C$, as follows:
\begin{center}
    \begin{tabular}{r @{\quad} c @{ $\mid$ } c@{ $\mid$ }c@{ $\mid$ }c@{ $\mid$ }c@{ $\mid$ }c@{ $\mid$ }c@{ $\mid$ }c@{ $\mid$ }c@{ $\mid$ }c@{ $\mid$ }c}
        $\C:$ & 1& 2& 3& 4,5& 6,8& 6,7& 7,8& 11,12& 9,13& 10,13& 14,15 \\
        $\pi_1:$ & 1& 2& 3& 4,5& 6& 7& 8& 11,12& 9,13& 10& 14,15 \\
        $\pi_2:$ & 1& 2& 3& 4,5& 6& 7& 8& 11,12& 9& 10,13& 14,15 \\
        $\pi_3:$ & 1& 2& 3& 4,5& 8& 6& 7& 11,12& 9,13& 10& 14,15 \\
        $\pi_4:$ & 1& 2& 3& 4,5& 8& 6& 7& 11,12& 9& 10,13& 14,15 
    \end{tabular}
\end{center}
The support trees corresponding to these embeddings of partitions are shown in~\Cref{f:support-trees}.

\begin{figure}[ht]
    \centering
    \includegraphics[width=\textwidth]{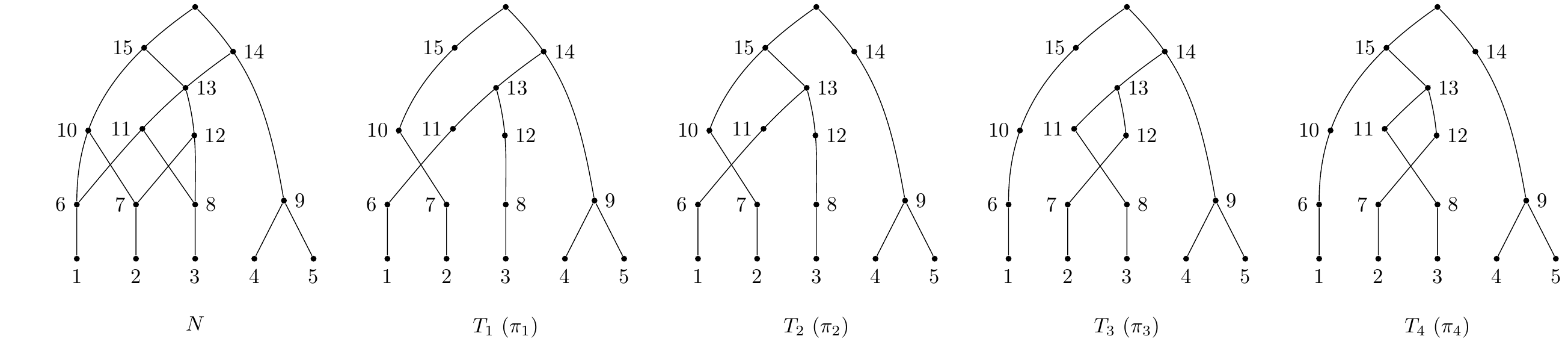}
    \caption{The tree-based network $N$ and the four support trees given by the four embeddings $\pi_1,\dots,\pi_4$, as described in~\Cref{eg:embedded.partitions}. }
    \label{f:support-trees}
\end{figure}
\end{eg}

\begin{table}[ht]
    \centering
    \begin{tabular}{lp{10cm}}
         \hline
         \textbf{Network} & \textbf{Cover}  \\ 
         \hline
         Spanning tree & A partition embedded in $\C$ \\
         Support tree & A full embedding of a partition in $\C$ \\
         \hline\\
    \end{tabular}
    \caption{Translation of concepts arising in tree-based networks.}
    \label{tab:dictionary-spanning.tree}
\end{table}

\subsection{Support trees for a binary tree-based network}

Support trees for binary tree-based networks have been counted in earlier work~\cite{pons2019tree,hayamizu2021structure}, building on an upper bound from~\cite{jetten2015characterising}. Covers provide an alternative and clear approach that replicates these results.

For instance (and without giving details of all the components of the statement):
\begin{thm}[\cite{pons2019tree}, Theorem 8]\label{t:pons.support.trees}
    For a binary tree-based network $N$, the number of support trees is:
    \[2^c\times\prod_{P\in\pi(\mathcal J_N)}\frac{1}{2}(v(P)+1),\]
    where 
    \begin{itemize}
        \item $\mathcal J_N$ is a bipartite graph derived from $N$ with parts given by the set of vertices with a reticulate child, and reticulations without a reticulate parent, 
        \item $c$ is the number of cycle components in $\mathcal J_N$,
        \item $\pi(\mathcal J_N)$ is the set of path components in $\mathcal J_N$ without an omnian terminal vertex, and 
        \item $v(P)$ is the number of vertices in the path component $P$.
    \end{itemize}
\end{thm}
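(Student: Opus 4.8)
The plan is to count support trees entirely inside the cover, using \Cref{l:support.trees.partitions} to replace them by full embeddings of partitions into the expanding cover $\C$ of $N$. Since $N$ is binary, every reticulation has in-degree exactly $2$, so each element of $\C$ that occurs more than once occurs exactly twice; by the dictionary in \Cref{tab:dictionary}, a full embedding is precisely a choice, for each such doubly-occurring element $r$, of one of the two sets of $\C$ containing it (the parent to which $r$ is attached), subject only to the requirement that no set of $\C$ is emptied. Elements occurring once are fixed and keep their sets non-empty automatically, so the only binding constraints come from sets all of whose elements occur twice --- these are exactly the children-sets of \emph{omnian} vertices (vertices all of whose children are reticulations). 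First I would record that a support tree of $N$ therefore corresponds bijectively to an assignment $r\mapsto\text{(parent set)}$ in which every omnian set receives at least one of its elements.

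Next I would exhibit the independence structure that makes this count multiplicative. Two reticulation-choices interact only when the reticulations share an omnian parent set, so I form the incidence graph $H$ whose vertices are the omnian sets and whose edges are the doubly-occurring elements, an edge joining the (one or two) omnian sets that contain it; an element with only one omnian parent becomes a pendant half-edge carrying an ``escape'' option to its non-omnian parent. Because $N$ is binary, each omnian set has at most two elements and each element lies in at most two sets, so $H$ has maximum degree $2$ and decomposes into paths and cycles. These components are what must be matched to the path- and cycle-components of Pons' bipartite graph $\mathcal J_N$, and the number of support trees is the product, over components of $H$, of the number of orientations (with escapes) in which every omnian vertex has in-degree at least one.

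The two per-component counts are then short arguments. For a cycle component with $k$ omnian vertices and $k$ edges, the in-degrees sum to $k$, so ``in-degree $\ge 1$ everywhere'' forces in-degree exactly $1$ everywhere, i.e.\ a consistent cyclic orientation; there are exactly two of these, yielding the factor $2^{c}$. For a path component whose two terminal half-edges both carry an escape (equivalently, neither terminal vertex of the corresponding $\mathcal J_N$-component is an omnian), I would linearly order its $t$ reticulate elements and encode each orientation by a bit (toward the lower- or higher-indexed omnian, the escapes sitting at the two ends); the condition that no omnian is starved forbids the consecutive pattern ``$01$'', so the admissible bit-strings are exactly the non-increasing ones, of which there are $t+1$. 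Rewriting $t$ in terms of Pons' vertex count $v(P)$ of the component should turn $t+1$ into $\tfrac12(v(P)+1)$, while a path with an omnian terminal forces its pendant element inward and contributes a factor $1$, which explains why such components are omitted from the product.

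The hard part will be the precise alignment of $H$ with $\mathcal J_N$, and in particular the bookkeeping of $v(P)$: Pons' graph records not only the omnian sets and their reticulate children but also the non-omnian parents at the ends of each chain, so the vertex count of a $\mathcal J_N$-path should exceed that of the corresponding $H$-path by exactly the number of escaping ends, and I would need to check this offset in each endpoint case. The genuinely delicate situation is a \emph{reticulation chain} (a stack of reticulations), where a doubly-occurring element is itself the label of an omnian set; such an element is simultaneously an edge of $H$ and a constraint-bearing vertex, corresponding to the overlap of the two parts of $\mathcal J_N$, and confirming that its dual role is counted exactly once --- with the forced-inward reductions propagating correctly along the chain --- is where the argument will need the most care.
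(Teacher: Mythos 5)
Your plan is essentially the route the paper itself takes: the paper cites this result from Pons et al.\ and instead proves the equivalent cover-based formula of \Cref{t:counting.support.trees} by exactly your decomposition --- support trees as full embeddings of partitions, with the binding constraints confined to sets whose elements all occur twice, organised into open chains (the paper's fences, contributing $t+1$ choices each) and cycles (crowns, contributing $2$ each, as one partition with two embeddings). The one step you flag as delicate --- the precise alignment of these components with $\mathcal J_N$, the identity $t+1=\tfrac{1}{2}(v(P)+1)$, and the stacked-reticulation case --- is also the step the paper does not carry out in detail (it asserts the crown/cycle and fence/path correspondence), so your proposal matches the paper's argument both in substance and in what it leaves to bookkeeping.
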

This is an explicit formula based on features of the network, using a representation of key features in the bipartite graph $\mathcal J_N$ in particular.

It was subsequently demonstrated that this formula relied on two key structural elements of the network: the number of ``crowns'' and the lengths of each ``$M$-fence''~\cite[Section 5.3]{hayamizu2021structure}.  These are types of ``zig-zag trails'', which are undirected paths of vertices in the network that alternate between tree and reticulation vertices~\cite{zhang2016tree}. A maximal length zig-zag trail is called a \emph{crown} if it forms a cycle, and is called an \emph{$M$-fence} if the ends of the path are tree vertices.   
Crowns and fences arise naturally when looking at the problem through the lens of covers.  We are able to obtain, by using covers, a formula that is analogous to that of~\Cref{t:pons.support.trees}, as follows.

Suppose $N$ is a binary tree-based network. We allow degenerate vertices with in-degree $2$ as well as out-degree 2.  The cover $\C$ for $N$ then consists of sets of size 1 or 2, and each integer appearing in $\C$ appears either once, if it is a tree vertex (in-degree 1), or twice if it is a reticulation  (in-degree 2). 

We will now describe an algorithm for obtaining an embedded partition (support tree) from $\C$, and this will allow us to count the number of such support trees.

The sets in $\C$ fall into exactly five categories:
\begin{enumerate}
    \item Singletons containing integers appearing once in $\C$, \label{singletons.once}
    \item Singletons containing integers appearing twice in $\C$, \label{singletons.twice}
    \item Pairs containing integers each appearing once in $\C$, \label{pairs.once}
    \item Pairs containing integers each appearing twice in $\C$, and \label{pairs.twice}
    \item Pairs containing one integer appearing once and the other appearing twice in $\C$. \label{pairs.mixed}
\end{enumerate}

Sets that contain elements that appear only once in $\C$ must be fully retained in any embedded partition.  Thus sets from categories \eqref{singletons.once} and \eqref{pairs.once} must be in the embedded partition, and there is no choice.

Because the partition embeds into $\C$, a set containing a singleton $\{a\}$ in $\C$ must also appear in the embedded partition.  Therefore, if $\{a\}$ is in category~\eqref{singletons.twice}, none of the other occurrences of $a$ in other sets in $\C$ can appear in the partition, and we delete them from the sets in the cover.  This will create new sets of size 1, and possibly of category~\eqref{singletons.twice}. We repeat this process until all sets in category \eqref{singletons.twice} are gone, creating a new cover we denote $\C_1$.  Note that $\C_1$ is uniquely determined from $\C$ and embeds into it.  Note also that $\C_1$ does not contain any sets in category \eqref{singletons.twice} above.

This leaves sets from categories \eqref{pairs.twice} and \eqref{pairs.mixed} to deal with.  These sets are connected. If a set is in category~\eqref{pairs.mixed}, then one of its elements appears elsewhere, and it can only be in a set from category \eqref{pairs.mixed} or \eqref{pairs.twice}.  We can thus form sequences of such sets in $\C_1$ by connecting a set from category \eqref{pairs.mixed} with a sequence of sets from category \eqref{pairs.twice} and ending with another set from category~\eqref{pairs.mixed}.  These sequences are uniquely determined by $\C_1$, and every set from category~\eqref{pairs.mixed} is in precisely one sequence of this form.  For example, such sequences are of form 
\begin{equation}\label{eq:mixed-to-mixed}
a_0,a_1\mid a_1,a_2\mid\dots\mid a_{t-1},a_t\mid a_t,a_{t+1},    
\end{equation}
where $a_0$ and $a_{t+1}$ do not appear elsewhere in $\C_1$ (note that $t$ could be 1).  We call such sequences \emph{fences} (they correspond to the $M$-fences defined above).  The notions of crowns and fences for covers are summarized in~\Cref{tab:dictionary-crowns}.

Let $\mathcal F$ denote the set of fences in $N$. For each fence $f$, let $r(f)$ denote the number of repeated integers in $f$, which we call its length. The fence in~\Cref{eq:mixed-to-mixed} has length $r(f)=t$.

A set from category~\eqref{pairs.twice} may be in a sequence such as the one above, or in a sequence of at least three sets from the same category:
\begin{equation}\label{eq:crown}
a_0,a_1\mid a_1,a_2\mid\dots\mid a_{t-1},a_t\mid a_t,a_0,
\end{equation}
where $t\ge 2$.  These correspond precisely to the `{crowns}' of~\cite{hayamizu2021structure}.

For either fences or crowns, we can count the number of selections of unique elements as follows.  

In the case of fences of length $t$ (\Cref{eq:mixed-to-mixed}), the number of choices is simply $t+1$, since there are $t+2$ elements to go into $t+1$ non-empty sets, so one has two elements and the rest have one element. There are $t+1$ choices for the set with two elements.  For example, with the fence $a,b\mid b,c\mid c,d\mid d,e$, we have $t=3$ and the choices are:
\begin{align*}
    &a,b\mid c\mid d\mid e\\ 
    &a\mid b,c\mid d\mid e\\
    &a\mid b\mid c,d\mid e\\
    &a\mid b\mid c\mid d,e.
\end{align*}

In the case of a crown, as in \Cref{eq:crown}, there is only one embedded partition. We have the same number of elements as we have non-empty sets, and so there is only one option for selecting unique elements.  Each element forms a singleton.  For example, in the crown $a,b\mid b,c\mid c,d\mid d,a$, we have only $a\mid b\mid c\mid d$.  However, although there is only one embedded partition, that partition has exactly two distinct embeddings.  We could have:
\begin{align*}
& a\mapsto\{a,b\},\ b\mapsto\{b,c\},\ c\mapsto\{c,d\}\text{ and }d\mapsto\{d,a\},\text{ or}\\
& b\mapsto\{a,b\},\ c\mapsto\{b,c\},\ d\mapsto\{c,d\}\text{ and }a\mapsto\{d,a\}.
\end{align*}

Therefore, we have shown the following result, which is equivalent to~\Cref{t:pons.support.trees}:
\begin{thm}\label{t:counting.support.trees}
Let $N$ be a binary tree-based network with cover $\C$.  The number of embedded partitions in $\C$, and therefore the number of support trees for $N$, is
\[ 
    2^c\times\prod_{f\in\mathcal F}\left(r(f)+1\right)
\]
if $\mathcal F$ is non-empty, and is $2^c$ if $\mathcal F=\emptyset$, where $c$ is the number of crowns in $\C$.
\end{thm}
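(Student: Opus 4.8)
The plan is to count full embeddings of partitions into $\C$ directly, invoking~\Cref{l:support.trees.partitions} to identify this count with the number of support trees of $N$. Since $N$ is binary and tree-based, every integer of $\C$ appears once (a tree vertex) or twice (a reticulation), and a full embedding is precisely a choice, for each repeated integer, of which of its two containing sets retains it, subject to the constraint that no set of $\C$ is emptied (so that the resulting partition still has $|\C|$ blocks). First I would dispense with the forced part of such a choice. Sets in categories~\eqref{singletons.once} and~\eqref{pairs.once} contain only integers appearing once, so they must survive intact and contribute no freedom. A category~\eqref{singletons.twice} singleton $\{a\}$ forces $a$ to be retained there and deleted from its other occurrence; as described in the construction of $\C_1$, iterating this is deterministic and yields a unique cover $\C_1$ embedded in $\C$. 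Every full embedding of $\C$ must respect these forced deletions, so full embeddings of $\C$ are in bijection with full embeddings of $\C_1$, reducing the count to $\C_1$, which contains only sets from categories~\eqref{singletons.once},~\eqref{pairs.once},~\eqref{pairs.twice}, and~\eqref{pairs.mixed}.

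The heart of the argument is a factorization. I would form the graph whose vertices are the sets of $\C_1$ and whose edges join two sets sharing a repeated integer. A set from category~\eqref{pairs.twice} has both of its integers shared (degree $2$), a set from category~\eqref{pairs.mixed} has exactly one shared integer (degree $1$), and sets from categories~\eqref{singletons.once} and~\eqref{pairs.once} are isolated. Hence the non-isolated part of this graph is a disjoint union of paths, whose endpoints lie in category~\eqref{pairs.mixed} --- these are exactly the fences of~\Cref{eq:mixed-to-mixed} --- together with cycles built from category~\eqref{pairs.twice} sets --- these are exactly the crowns of~\Cref{eq:crown}. The key point, and the main obstacle, is to verify that the choices decouple across these components: a deletion of a repeated integer affects only the (at most two) sets containing it, all within a single fence or crown, and whether a given set is emptied therefore depends only on the choices made within its own component. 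Consequently the valid global configurations are exactly the independent products of valid local configurations, and the total number of full embeddings factorizes as a product over the fences and crowns (with isolated sets contributing a factor of $1$).

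It then remains to count valid configurations within a single component. For a fence of length $r(f)=t$, as in~\Cref{eq:mixed-to-mixed}, the two unique endpoints $a_0$ and $a_{t+1}$ are retained, leaving $t+2$ surviving integers to fill $t+1$ non-empty sets, so exactly one set holds two integers; there are $t+1=r(f)+1$ choices for which set this is, and I would check that each such choice extends to a (necessarily unique) embedding, so that partitions and embeddings coincide here. For a crown, as in~\Cref{eq:crown}, the numbers of sets and of surviving singletons agree, forcing the unique all-singletons partition; but this partition admits exactly two embeddings, corresponding to the two cyclic orientations assigning each set one of its two integers, so each crown contributes a factor of $2$. Multiplying the local counts over all components yields $2^{c}\prod_{f\in\mathcal F}(r(f)+1)$ when $\mathcal F\neq\emptyset$ and $2^{c}$ when $\mathcal F=\emptyset$, where $c$ is the number of crowns. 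The only delicate points are the decoupling in the second paragraph and the observation that a crown contributes through two embeddings of a single partition rather than through two distinct partitions.
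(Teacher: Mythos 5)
Your proposal is correct and follows essentially the same route as the paper: the same five-category classification, the same deterministic reduction to $\C_1$, the same decomposition of the remaining sets into fences and crowns, and the same local counts ($r(f)+1$ per fence, a single partition with two embeddings per crown). Your explicit graph-theoretic justification of the decoupling across components is a slightly more careful rendering of a step the paper treats as immediate, but it is not a different argument.
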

Note that the number of crowns, $c$, is the same as the number of components referred to in~\Cref{t:pons.support.trees}.

Given a cover $\C$, we can compute the number of crowns and the lengths of fences, and thus the number of embedded partitions, by using \Cref{a:count.embedded.covers}, which uses the definition of `acquaints'.

\begin{defn}\label{d:acquaints}
Set $x\sim y$ if $x=y$ or $x,y$ are siblings, and consider the transitive closure of $\sim$, which is an equivalence relation on the set of vertices of the network. Two vertices in an equivalence relation are said to be {\em acquaints} of each other.
\end{defn}

Acquaints can be defined self-referentially by saying that an \emph{acquaint} of a vertex $x$ is a sibling of $x$ or is a sibling of an acquaint of $x$.
Fences and crowns can be described in terms of acquaints, as follows.

\begin{thm}\label{t:fences.acquaints}\label{t:crowns.acquaints}
    Let $N$ be a binary tree-based network with cover $\C$. Then 
    \begin{enumerate}
        \item $N$ has a fence if and only if there exists a set of acquaints in which exactly two vertices that appear uniquely in $\C$ have one sibling. 
        \item $N$ has a crown if and only if there exists a set of acquaints in which no vertex has one sibling. 
    \end{enumerate}

\end{thm}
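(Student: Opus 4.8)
The plan is to recast the acquaint relation as a graph and then read off fences and crowns as connected components of prescribed shape. Define the \emph{sibling graph} $G$ on the vertex set $[m]$ by placing an edge between $a$ and $b$ for each size-$2$ set $\{a,b\}\in\C$; size-$1$ sets contribute no edges. Since two vertices are siblings exactly when they co-occur in a set, and all sets here have size at most $2$, siblinghood is precisely adjacency in $G$, so the acquaint relation of \Cref{d:acquaints} is the reachability relation of $G$ and the sets of acquaints are exactly the connected components of $G$. The number of siblings of a vertex is then its degree in $G$, and a vertex appears uniquely in $\C$ while having one sibling exactly when it has degree $1$ and lies in no singleton.

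First I would record the structural constraint that drives everything: each integer occurs at most twice in $\C$ (once if it labels a tree vertex, twice if it labels a reticulation), so every vertex of $G$ has degree at most $2$, and hence every component of $G$ is a path or a cycle. I would also fix the dictionary between degree and multiplicity: a degree-$2$ vertex has two pair-occurrences and so appears exactly twice; a degree-$0$ vertex has a single singleton-occurrence and so appears once; and a degree-$1$ vertex appears once (a lone pair-occurrence) or twice (a pair together with a singleton).

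With this in hand the two equivalences follow by matching components against \eqref{eq:crown} and \eqref{eq:mixed-to-mixed}. For (2): a cycle component has every vertex of degree $2$, so every vertex appears twice and only in pairs, which is exactly a crown as in \eqref{eq:crown}; conversely a component in which no vertex has exactly one sibling contains no degree-$1$ vertex, and since a path always has degree-$1$ endpoints, such a component must be a cycle. For (1): if a path component has both endpoints appearing uniquely, then its two endpoints have degree $1$ and appear once while all interior vertices have degree $2$ and appear twice, which is exactly a fence as in \eqref{eq:mixed-to-mixed}; conversely, requiring exactly two vertices that appear uniquely and have one sibling forces exactly two degree-$1$ vertices that appear once, which among paths and cycles means a path whose two endpoints each appear once. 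Because the interior vertices of such a path appear twice and it meets no singleton, it is untouched by the cascading reduction $\C\mapsto\C_1$ and so survives there as a fence; conversely every fence of $\C_1$ arises in this way, since its pairs already lie in $\C$ and each endpoint, appearing once in $\C_1$, must appear once in $\C$ (otherwise the reduction would have broken its defining pair).

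I expect the bookkeeping around the reduction and two degenerate cases to be the main obstacle. A ``dirty'' path, one of whose endpoints appears twice by carrying an extra singleton, is consumed entirely by the cascading deletion that builds $\C_1$ and yields no fence, so I must verify that such a component has strictly fewer than two vertices appearing uniquely with one sibling and is therefore correctly excluded by (1). For (2), an isolated vertex (a singleton whose element appears once) vacuously satisfies ``no vertex has one sibling'' yet is no crown, so the statement should be read as quantifying over nontrivial sets of acquaints, namely those containing at least one sibling pair; I would state this convention explicitly, after which the cycle/crown equivalence is exact. Finally, a single pair whose two elements both appear once is best recorded as a length-$0$ fence, which is consistent both with (1) and with its contribution of a factor $1$ in \Cref{t:counting.support.trees}.
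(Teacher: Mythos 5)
Your proof is correct and rests on the same core idea as the paper's: in a binary cover every integer lies in at most two sets, so each acquaint class is forced to be a maximal alternating chain of pairs, and the two parts are distinguished by whether the chain closes into a cycle or terminates in two uniquely-occurring degree-one endpoints. Your packaging of this as the path/cycle decomposition of a maximum-degree-$2$ sibling graph is a cleaner formalization than the paper's direct verification, and it buys you exactly the degenerate cases the paper's proof silently skips: the isolated vertex arising from a singleton, which vacuously satisfies the hypothesis of (2) but is no crown, so the quantification must indeed be restricted to acquaint classes containing at least one sibling pair (as \Cref{a:count.embedded.covers} implicitly does by restricting to integers occurring in size-$2$ sets); and the ``dirty'' path whose endpoint carries an extra singleton, which you correctly show is excluded from (1) because that endpoint appears twice, and which is consumed by the reduction to $\C_1$. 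Your remaining observation --- that a single pair with both elements unique meets the criterion of (1) yet is only a ``length-$0$ fence'' under \Cref{eq:mixed-to-mixed}, which requires $t\ge 1$ --- flags a genuine, if harmless, looseness in the statement that the paper's own converse argument shares; your resolution via a contributed factor of $1$ is consistent with \Cref{t:counting.support.trees}.
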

\begin{proof}
(1)    For the forward direction, suppose that we have a fence like that in \Cref{tab:dictionary-crowns}.  The integers in the set $\{a_0, a_1, a_2, \dots, a_{t-1},a_t, a_{t+1}\}$ are acquaints, $a_0$ and $a_{t+1}$ have only one sibling ($a_1$ and $a_t$ respectively), and they appear uniquely by assumption.

    Conversely, assume there is a set of acquaints in which exactly two vertices (say $a_i$ and $a_j$) that appear uniquely in $\C$ have one sibling. 
    Since we assume that the network is binary, $a_i$ and $a_j$ appear in only one subset, but they can not be in the same one; otherwise, they would not be acquainted with the other vertices.

    It is also the case that every other vertex will appear in exactly two subsets; otherwise, it would imply an in-degree greater than 2, which is not allowed in a binary network.
    Therefore, we have a set of a type described in \Cref{tab:dictionary-crowns}, and the network has a fence.

(2)
    For the forward direction, suppose we have a crown (as indicated in \Cref{tab:dictionary-crowns}). The integers in the set $\{a_0, a_1, a_2, \dots, a_{t-1},a_t\}$ are acquaints, and none of them has exactly one sibling.

    Conversely, assume there is a set of acquaints in which no vertex has one sibling. 
    Since we assume the network is binary,  every vertex will appear in exactly two subsets; otherwise, it would imply an in-degree greater than 2, which is not allowed in a binary network. 
    On the other hand, if a vertex appeared in exactly one subset, this would imply that it had only one sibling, which violates the assumption.
    Therefore, we have a set of a type described in \Cref{tab:dictionary-crowns}, and the network has a crown.
\end{proof}

According to the theorem above, we can use \Cref{a:count.embedded.covers} to count the number of embedded partitions by enumerating the acquaints of all integers that are inside a set of size 2, because, in the definitions of crown and fences (\Cref{tab:dictionary-crowns}), they do not contain sets of any other sizes.

\begin{algorithm}
    \caption{Count the number of support trees in a binary tree-based network.}    
    \begin{algorithmic}
        \Procedure{TraverseAcquaints}{$\tau$, $i$, $A$}
            \State {add $i$ to $A$}
            \State {mark $i$ in $\tau$ as visited}
            \For {$s \in \tau_i$ such that $s$ is not marked as visited}
                \State {\textsc{TraverseAcquaints}$(\tau, s, A)$}
            \EndFor
        \EndProcedure
        
        \Procedure{CountSupportTreesBinaryNetworks}{$\C$}
            \State $\sigma_i \gets $number of times integer $i$ appears in $\C$
            \State $\C^{=2} \gets $ all subsets of $\C$ of size 2
            \State $I \gets $ integers appearing in $\C^{=2}$ 
            \State {$\tau \gets $ table in which index $i \in I$ contains all siblings of $i$} \Comment{See \Cref{tab:dictionary}}
            \State $c \gets 0$ \Comment{Number of crowns}
            \State $f \gets 1$ \Comment{Number of fences}
            \While{$\tau$ not all $i$ are marked as visited}
                \State select $i$ not marked as visited in $\tau$
                \State $A \gets \emptyset$ 
                \State {\textsc{TraverseAcquaints}$(\tau, i, A)$} \Comment{$A$ will contain the acquaints of $i$}

                \State {$A_1 \gets $ set of integers $i$ in $A$ that have one sibling in $\tau_i$}
                \If {$|A_1| = 0$}
                    \State {$c \gets c + 1$} \Comment{\Cref{t:crowns.acquaints}}
                \EndIf
                \If {$|A_1| = 2$}
                    \State {$a,b \in A_1$}
                    \If {$\sigma_a = 1 $ and $\sigma_b = 1$}
                        \State {$f \gets f \times (|A| - 1)$} \Comment{\Cref{t:fences.acquaints}. By \Cref{t:counting.support.trees}, we have to add 1 to the length}
                    \EndIf
                \EndIf

            \EndWhile
            \State \textbf{return} $2^c \times f$
        \EndProcedure
    \end{algorithmic}
    \label{a:count.embedded.covers}
\end{algorithm}

\begin{eg}
    We saw in~\Cref{eg:embedded.partitions} that the cover for the binary tree-based network in~\Cref{f:network} has four embedded partitions, and hence the network has four support trees (shown in~\Cref{f:support-trees}).  These can be counted using~\Cref{t:counting.support.trees} as follows.  The cover $\C=1\mid 2\mid 3\mid 4,5\mid 6,8\mid 6,7\mid 7,8\mid 11,12\mid 9,13\mid 10,13\mid 14,15$ has one crown, namely $6,8\mid 6,7\mid 7,8$, and one fence $9,13\mid 10,13$, which has length 1 (a single reticulation).  Hence, the number of support trees is $2^2\times (1+1)=4$, as expected.
\end{eg}

\begin{table}[ht]
    \centering
    \begin{tabular}{lp{10cm}}
         \hline
         \textbf{Network} & \textbf{Cover}  \\ 
         \hline
         Crown & Collection of sets $a_0,a_1\mid a_1,a_2\mid\dots\mid a_{t-1},a_t\mid a_t,a_0$.\\
         Fence  & Collection of sets $a_0,a_1\mid a_1,a_2\mid\dots\mid a_{t-1},a_t\mid a_t,a_{t+1}$ with $a_0\neq a_{t+1}$ both appearing uniquely.\\
         \hline\\
    \end{tabular}
    \caption{Translation of concepts arising from counting support trees for binary tree-based networks.}
    \label{tab:dictionary-crowns}
\end{table}

\section{Tree-child networks}\label{s:tree-child}

Tree-child networks are phylogenetic networks for which every vertex has a child that is a tree vertex~\cite{cardona2008comparison}.  They satisfy a number of important properties. For instance, they have the property that every vertex is \emph{visible}.  This is a property that we describe in~\Cref{s:visible}, but first, tree-child networks turn out to have a very natural description in terms of covers, as follows.

\begin{thm}\label{t:treechild.covers}
    Tree-child networks are in bijection with expanding covers for which each set contains an integer that appears exactly once in the cover.
\end{thm}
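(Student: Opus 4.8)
The plan is to reduce the tree-child condition to a property of the cover via the dictionary of \Cref{tab:dictionary}, after first confirming that every tree-child network is labellable so that the bijection \cite[Thm.~4.4]{francis2023labellable} is applicable. For the labellability step I would argue directly: if two distinct non-leaf vertices $u$ and $v$ of a tree-child network shared the same (necessarily non-empty) set of children $S$, then every vertex in $S$ would have both $u$ and $v$ among its parents, hence in-degree at least $2$, and so would be a reticulation. But then every child of $u$ is a reticulation, contradicting the defining property that $u$ has a tree-vertex child. Thus the map from non-leaf vertices to their children-sets is injective, which is exactly the topological characterisation of labellable networks recalled in \Cref{s:prelim}, and the bijection of \cite[Thm.~4.4]{francis2023labellable} restricts to tree-child networks.

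Once labellability is established, the counting $n=||\C||-|\C|+1$ shows that the $|\C|$ sets of $\C$ are in bijection with the $|\C|$ non-leaf vertices of $N$, each set being the children of its corresponding vertex. By \Cref{tab:dictionary}, an integer appears exactly once in $\C$ if and only if the vertex it labels is a tree vertex. Hence the tree-child requirement, that every non-leaf vertex has a tree-vertex child, translates verbatim into the requirement that every set $C\in\C$ contains an integer appearing exactly once in $\C$. This yields the forward direction directly.

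For the reverse direction I would take an expanding cover $\C$ in which each set contains an integer appearing exactly once, let $N$ be the labellable network it determines, and verify $N$ is tree-child. Each non-leaf vertex $x$ of $N$ is the vertex whose children form some set $C\in\C$; by hypothesis $C$ contains an integer appearing once, which labels a tree-vertex child of $x$. As this exhausts all non-leaf vertices (including the root, whose children are the final set in the labelling order), and leaves impose no condition, $N$ satisfies the tree-child property.

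I expect the only real obstacle to be the labellability step, since the cover is simply undefined for a non-labellable network and the statement would otherwise be ill-posed; the direct in-degree argument above closes this gap. After that, the two directions are mirror images obtained by reading \Cref{tab:dictionary} forwards and backwards, with the mild care needed to confirm that the reverse construction genuinely produces a tree-child network rather than merely a labellable one.
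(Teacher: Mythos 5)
Your proposal is correct and follows essentially the same route as the paper: both directions rest on the dictionary identification ``tree vertex $=$ integer appearing exactly once'' and ``set in $\C$ $=$ children of a non-leaf vertex''. The one addition is your explicit in-degree argument that tree-child networks are labellable (so the bijection with covers applies at all); the paper leaves this to the cited prior work, but your direct verification is correct and makes the statement self-contained.
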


\begin{proof}
The proof relies on the fact that the integers that appear precisely once in a cover are exactly the tree vertices.  

    Let $N$ be a tree-child network with expanding cover $\C$. Each non-leaf vertex $v$ in $N$ corresponds to a specific set $C_v$ in $\C$, whose elements label the children of $v$ in $N$.  Because $N$ is a tree-child network, each such vertex $v$ has at least one child that is a tree vertex.  The labels of the tree vertices appear precisely once in the cover, so the set $C_v$ contains at least one element that appears precisely once in the cover. This holds for every non-leaf vertex, and so for every set in $\C$, which establishes the forward direction.

    The reverse direction is also straightforward.  Suppose that every set in an expanding cover $\C$ has an element that appears precisely once in $\C$.  Since each set in the cover is the set of labels of the children of a non-leaf vertex, this implies that every non-leaf vertex has at least one child whose label appears once in the cover. In other words, it is a tree vertex.  Thus, the network corresponding to $\C$ is a tree-child network.
\end{proof}

\subsection{Visible vertices}\label{s:visible}

An important property of tree-child networks is that all of their vertices are \emph{visible}~\cite[Lemma 2]{cardona2008comparison}.  A vertex $v$ in a network is visible if there is a leaf $x$ for which every path from the root to $x$ passes through $v$.  In this section, we show how visibility can be interpreted by using covers, beginning with the definition of the \emph{backtrack} of a label in a cover.

\begin{defn}\label{d:backtrack}
    Let $\C$ be an expanding cover in labelling order and let $x$ be an element of $[m]$. Then a \emph{backtrack for $x$} is a sequence of sets 
    $S_1,\dots,S_t$ in $\C$ for which the label of a set containing $x$ is in $S_1$, and the label of $S_i$ is an element of $S_{i+1}$ for each $i=1,\dots,t-1$.  This corresponds to the output of~\Cref{a:labellingorder.backtrack}.  Let $B_\C(x)$ denote the set of all backtracks of $x$ in $\C$.
\end{defn}

\begin{algorithm} 
\caption{Backtracking algorithm}
\begin{algorithmic}
\Require {Expanding cover $\C$ in labelling order}
\Procedure{Backtrack}{$\C$, $x$}
    \State {seq $\gets$ [\quad]}
    \State {$s \gets$ a subset of $\C$ containing $x$}
    \While {label of $s$ is not $\rho$}
        \State {$s \gets$ a subset of $\C$ that contains the label of $s$ as an element}
        \State {add $s$ to seq}
    \EndWhile
    \State \textbf{return} seq
\EndProcedure
\end{algorithmic}
\label{a:labellingorder.backtrack} 
\end{algorithm}

We can characterise visibility in a network by using the backtracking algorithm. Given $x \in [m]$ and a backtrack $\beta$ for $x$, we define $L(\beta) = \{\text{label of } s \,|\, s \in \beta \}$. In this way, $L(\beta)$ contains the vertices of a path from $x$ to $\rho$ (the root),  $\bigcup\limits_{\beta\in B_C(x)} L(\beta)$ is the set of all vertices that \emph{can be} visited with a path from $x$ to $\rho$, and $\bigcap\limits_{\beta\in B_C(x)} L(\beta)$ is the set of all vertices that \textit{must be} visited on a path from $x$ to $\rho$.

\begin{thm}\label{t:single.visible.backtrack}
    Given a cover $\C$ in labelling order and $x \in [m]$, $x$ is a visible vertex in the corresponding network if and only if there exists $y \in [n]$ such that $x \in \bigcap\limits_{\beta\in B_\C(y)} L(\beta)$.
\end{thm}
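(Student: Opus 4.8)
The plan is to reduce the statement to the interpretation of $\bigcap_{\beta\in B_\C(y)}L(\beta)$ already articulated in the text preceding the theorem: that this intersection is precisely the set of vertices through which every directed path from $y$ up to the root must pass. Granting that interpretation, the theorem is essentially a direct translation of the definition of visibility, so the real content of the proof is to justify the interpretation by establishing a tight correspondence between backtracks of a label and directed root-to-leaf paths.

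First I would set up this correspondence. Using the dictionary in \Cref{tab:dictionary} --- in particular, that the subsets of $\C$ containing a given label $w$ are exactly the parents of $w$, and that the label of a set is the parent of each of its elements --- I would argue that each iteration of \Cref{a:labellingorder.backtrack} (replacing the current set $s$ by a set containing its label) corresponds to moving from a vertex to one of its parents. Consequently a backtrack $\beta\in B_\C(y)$ records a directed path from $y$ to $\rho$ obtained by repeatedly choosing a parent, and $L(\beta)$ is exactly the set of (ancestor) vertices lying on that path. The key point to nail down here is that the correspondence is \emph{onto}: every directed path from $y$ to the root is realized by some legal run of the algorithm, since at each vertex the procedure may select any set containing the current label, i.e.\ any parent. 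I would also check the boundary cases (a label that is a child of the root, giving a one-step path, and the endpoints $y$ and $\rho$ themselves) so that $L(\beta)$ contains exactly the intended vertices.

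With the correspondence in hand, I would prove the two directions. For the forward direction, suppose $x$ is visible, witnessed by a leaf $y\in[n]$ such that every directed path from the root to $y$ contains $x$; reversing edges, every directed path from $y$ to $\rho$ contains $x$. Since each backtrack $\beta\in B_\C(y)$ is such a path and $x$ lies on it, $x\in L(\beta)$ for every $\beta$, and hence $x\in\bigcap_{\beta}L(\beta)$. For the converse, suppose some $y\in[n]$ satisfies $x\in\bigcap_{\beta\in B_\C(y)}L(\beta)$. If some root-to-$y$ path avoided $x$, then by surjectivity of the correspondence its reversal would be realized by a backtrack $\beta'$ with $x\notin L(\beta')$, contradicting membership in the intersection; hence every root-to-$y$ path passes through $x$, and $x$ is visible.

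I expect the main obstacle to be the surjectivity and faithfulness of the backtrack-to-path correspondence --- precisely, that $B_\C(y)$ enumerates \emph{all} directed $y$-to-root paths and nothing spurious --- since everything else is bookkeeping on the definition of visibility. A secondary subtlety is reconciling the exact output of \Cref{a:labellingorder.backtrack} with the set $L(\beta)$ at the endpoints (ensuring that $x=y$ for a leaf and $x=\rho$ are handled consistently), so that the ``must-visit'' set is genuinely $\bigcap_{\beta\in B_\C(y)}L(\beta)$.
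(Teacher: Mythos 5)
Your proposal is correct and follows essentially the same route as the paper: both reduce the theorem to the interpretation of $\bigcap_{\beta\in B_\C(y)}L(\beta)$ as the set of vertices that every path from $y$ to the root must visit, and then read off visibility from the definition. The only difference is that you propose to actually verify the backtrack-to-path correspondence (including surjectivity), which the paper simply asserts in the paragraph preceding the theorem; your version is therefore somewhat more careful but not a different argument.
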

\begin{proof}
    For the forward direction, assume that a vertex $x$ of a network is visible. By definition, there exists a leaf $y$ (in other words, $y \in [n])$ such that all paths from the root to $y$ pass through $x$. Since $\bigcap\limits_\beta L(\beta)$ is the set of all vertices we \textit{have to} visit from $y$ to $\rho$,  $x$ must be in this intersection.

    For the backward direction, let $y \in [n]$ and $x \in \bigcap\limits_{\beta\in B_\C(y)} L(\beta)$. Then it means that all paths from $y$ to $\rho$ contain $x$. Therefore $x$ is visible in the corresponding network.
\end{proof}

Since all $x \in \bigcap\limits_{\beta\in B_C(x)} L(\beta)$ are visible vertices and vice versa, we obtain the following corollary.

\begin{cor}\label{c:all.visible.backtrack}
    Given a cover $\C$ in labelling order, then all $x \in \bigcup\limits_{y \in [n]}\bigcap\limits_{\beta\in B_\C(y)} L(\beta)$ are visible vertices in the corresponding network and vice versa.
\end{cor}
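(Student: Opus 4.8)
The plan is to derive this corollary directly from \Cref{t:single.visible.backtrack} by recognising that the union over $y \in [n]$ merely externalises the existential quantifier appearing in that theorem. First I would unwind the definition of set union: an element $x$ lies in $\bigcup_{y \in [n]} \bigcap_{\beta \in B_\C(y)} L(\beta)$ if and only if there exists some $y \in [n]$ with $x \in \bigcap_{\beta \in B_\C(y)} L(\beta)$. This is precisely the right-hand condition of the biconditional in \Cref{t:single.visible.backtrack}.

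With that observation in hand, both directions follow in a single step. For the forward direction, I would suppose $x$ belongs to the union; then the preceding equivalence supplies a witness $y \in [n]$ with $x \in \bigcap_{\beta \in B_\C(y)} L(\beta)$, and \Cref{t:single.visible.backtrack} immediately gives that $x$ is visible. For the converse, if $x$ is visible, the theorem guarantees the existence of such a $y$, so $x$ lies in the corresponding intersection and hence in the union. Thus membership in the union and visibility are equivalent, which is exactly the ``and vice versa'' asserted in the statement.

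The only point requiring a little care is the range of vertices under consideration: \Cref{t:single.visible.backtrack} is phrased for $x \in [m]$, whereas elements of $L(\beta)$ may include the root label $\rho$ (since every backtrack terminates at the set labelled $\rho$). Because the root is trivially visible, as every root-to-leaf path begins at the root, this causes no inconsistency; I would either restrict attention to $x \in [m]$ or simply note that the root is correctly classified as visible. I do not anticipate any genuine obstacle here: the entire content is carried by \Cref{t:single.visible.backtrack}, and the corollary is a repackaging that replaces the existential quantifier ``there exists $y \in [n]$'' with an explicit union over $y$.
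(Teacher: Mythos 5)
Your proposal is correct and matches the paper's (essentially one-line) justification: the corollary is obtained by observing that the union over $y\in[n]$ simply packages the existential quantifier in \Cref{t:single.visible.backtrack}. Your additional remark about the root label $\rho$ appearing in $L(\beta)$ is a reasonable point of care that the paper does not spell out, but it does not change the argument.
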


\begin{table}[ht]
    \centering
    \begin{tabular}{lp{10cm}}
         \hline
         \textbf{Network} & \textbf{Cover}  \\ 
         \hline
         Path from node $x$ to the root & A backtrack for $x$ \\
         Visible vertex $x$  & There is a $y\in [n]$ such that $x\in\displaystyle\bigcap_{\beta\in B_\C(y)}L(\beta)$.\\
         \hline\\
    \end{tabular}
    \caption{Translation of the concepts arising in tree-child networks. }
    \label{tab:dictionary-backtrack}
\end{table}

\subsection{Support trees for tree-child networks}

\Cref{t:treechild.covers} allows us to provide an alternative proof of a result about support trees in tree-child networks, as follows.

\begin{cor}[\cite{francis2018identifiability}, Theorem 3.3]\label{c:binary.treechild.support}
    A binary tree-child network with $k$ reticulations has $2^k$ support trees.
\end{cor}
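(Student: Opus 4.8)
The plan is to read off the count directly from the covers framework, using the bijection between support trees and full embeddings of partitions established in \Cref{l:support.trees.partitions} together with the tree-child characterisation in \Cref{t:treechild.covers}. First I would observe that, since $N$ is binary, every reticulation has in-degree exactly $2$ and hence (by \Cref{tab:dictionary}) its label appears in exactly two sets of the cover $\C$. By \Cref{l:support.trees.partitions}, a support tree is precisely a full embedding of a partition into $\C$, which amounts to choosing, for each reticulation, which of its two sets retains it (equivalently, which of its two parents is kept), subject only to the requirement that no set of $\C$ is emptied of all its elements; this is what guarantees the resulting partition still has $|\C|$ blocks and is therefore full.

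There are exactly $2^k$ such choice vectors, one binary choice per reticulation, so the whole argument reduces to showing that every choice vector is admissible (no set is emptied) and that distinct choice vectors yield distinct support trees. The key step, and the place where the tree-child hypothesis enters, is the first of these: by \Cref{t:treechild.covers}, each set in $\C$ contains an integer that appears exactly once in $\C$, i.e. a tree vertex. An element appearing only once is never a reticulation and so is never removed by any choice, meaning every set retains at least this private element under every choice vector. Hence no set is ever emptied, all $2^k$ choice vectors are admissible. Distinctness is immediate, since two choice vectors that differ on some reticulation retain different incoming edges and so give different subgraphs; alternatively it follows from the bijection of \Cref{l:support.trees.partitions}. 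This would give exactly $2^k$ support trees.

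I expect the main obstacle to be pinning down precisely that the only obstruction to a choice vector being a valid full embedding is the emptying of a set, so that the tree-child ``private element'' property really does make every choice admissible; the crown example preceding \Cref{t:counting.support.trees} shows that in a general binary tree-based network the admissibility constraint can be highly non-trivial (a crown on $t$ reticulations admits only $2$ of its $2^t$ choices), so the content of the corollary is exactly that tree-child networks carry no such entanglement. As a cross-check I would verify the claim against the counting formula of \Cref{t:counting.support.trees}: a tree-child cover can contain neither a crown nor a fence of length greater than $1$, because any set internal to such a structure is a pair of two twice-appearing integers and hence has no element appearing exactly once, contradicting \Cref{t:treechild.covers}. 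Thus $c=0$ and every fence has length $1$ and contains exactly one reticulation, so the formula collapses to $2^{0}\times\prod_{f\in\mathcal F}(1+1)=2^{|\mathcal F|}=2^{k}$, recovering the same answer.
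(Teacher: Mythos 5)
Your proposal is correct, and your primary argument takes a genuinely different route from the paper's proof of \Cref{c:binary.treechild.support}. The paper proves the corollary by specialising the crown/fence formula of \Cref{t:counting.support.trees}: it observes that a tree-child cover has no sets consisting entirely of repeated integers, hence no crowns and only fences of length $1$, one per reticulation, giving $2^0\times 2^k$ --- which is exactly the ``cross-check'' in your last paragraph, so that part of your write-up reproduces the paper's proof verbatim. Your main argument instead bypasses the zig-zag machinery entirely: you count admissible choice vectors directly, using \Cref{l:support.trees.partitions} to identify support trees with full embeddings, and \Cref{t:treechild.covers} to guarantee that every set retains a private, uniquely-occurring element so that no choice of parent for a reticulation can empty a set. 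You also correctly isolate the one point that needs care, namely that emptying a set is the \emph{only} obstruction to a choice vector giving a full embedding (disjointness and coverage are automatic once each integer is retained exactly once), and your crown example shows why this is the substantive content. What each approach buys: the paper's route is a one-line corollary of a theorem it has already paid for; your route is more elementary and self-contained, and it generalises immediately to non-binary networks (each reticulation of in-degree $d$ contributes a factor $d$), which is essentially the content of the theorem the paper states immediately after this corollary (spanning trees number $\prod_x \text{in-degree}(x)$, and tree-child is equivalent to every spanning tree being a support tree). So your argument anticipates that later result rather than the proof actually given for the corollary.
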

\begin{proof}
     Since each set in the cover for a tree-child network has a uniquely appearing element, there are no sets containing only reticulations (i.e. no singletons with elements that appear elsewhere, and no pairs in which both elements are repeated). Using the categories above, all sets in such a cover are from Categories~\eqref{singletons.once}, \eqref{pairs.once}, or \eqref{pairs.mixed}.

     As a consequence, there are no crowns, which require sets with two reticulations, and each fence can only have length 1, being of the form $a,b\mid b,c$, and containing only one reticulation ($b$ in this case).  Furthermore, each repeated integer in the cover (i.e., each reticulation) is in a fence, since it must be part of a pair with a uniquely appearing element (a tree vertex).  Thus, the number of fences is the number of reticulations, and each fence has length 1.
     Therefore, by~\Cref{t:counting.support.trees}, there are $2^k$ support trees.
\end{proof}

Corollary~\ref{c:binary.treechild.support} also follows immediately by combining both parts of the following result.
\begin{thm}
\mbox{}
\begin{itemize}
\item[(i)]     The number of spanning trees in a phylogenetic network is the product of all the in-degrees of the reticulation vertices.
\item[(ii)] A network is a tree-child network if and only if every spanning tree is also a support tree.
\end{itemize}
\end{thm}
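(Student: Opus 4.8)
The plan is to prove the two parts separately, treating a spanning tree of the (single-rooted) network purely as a choice of one incoming edge for each non-root vertex, and then translating the condition ``is a support tree'' into the language already developed in \Cref{tab:dictionary-spanning.tree}. For part (i), I would first observe that selecting, for each non-root vertex $x$, one of its incoming edges produces a subgraph with $m+1$ vertices and $m$ edges. To see this is always a genuine spanning tree, I would follow the chosen parent pointers from any vertex: since the network is a finite directed acyclic graph with a unique vertex of in-degree zero, this backward walk must terminate at the root, so every vertex is connected to the root and the subgraph is a tree. Conversely, every spanning tree assigns each non-root vertex a unique parent. Because the choices at distinct vertices are independent and every combination yields a valid tree, the count is $\prod_{x}\deg^-(x)$ over non-root $x$; since leaves and tree vertices have in-degree $1$, this collapses to $\prod_{r}\deg^-(r)$ over the reticulations. (Equivalently, via \Cref{tab:dictionary-spanning.tree}, a spanning tree is an embedded partition, and embedding amounts to placing each integer $x$ into one of the $\deg^-(x)$ sets of $\C$ that contain it.)

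For part (ii), I would start from the fact that a support tree is exactly a spanning tree whose leaves coincide with those of $N$, which by \Cref{tab:dictionary-spanning.tree} is a \emph{full} embedding: one in which every non-leaf vertex of $N$ retains at least one child. Hence a spanning tree fails to be a support tree precisely when some non-leaf vertex $v$ has out-degree zero in the tree, i.e.\ none of the children of $v$ selected $v$ as their parent. I would then characterise when this is realisable. A child $c$ of $v$ can avoid choosing $v$ if and only if $c$ has another parent, that is, $c$ is a reticulation; and the key point is that all children of $v$ can avoid $v$ \emph{simultaneously} exactly when every child of $v$ is a reticulation, because (by part (i)) the parent choices are independent and any such choice still yields a valid spanning tree.

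Combining these observations, there exists a spanning tree that is not a support tree if and only if some non-leaf vertex of $N$ has all of its children being reticulations, i.e.\ if and only if $N$ is not tree-child. Negating both sides gives that every spanning tree is a support tree if and only if $N$ is tree-child, which is part (ii). In cover language this is immediate from \Cref{t:treechild.covers}: an embedded partition can omit the set $C_v$ precisely when every element of $C_v$ is repeated (all children of $v$ are reticulations), and every set of $\C$ contains a uniquely appearing integer exactly when $N$ is tree-child. Finally I would note that (i) and (ii) together recover \Cref{c:binary.treechild.support}: in a binary tree-child network every reticulation has in-degree $2$ and every spanning tree is a support tree, so the number of support trees is $2^{k}$.

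The main obstacle is the simultaneous-realisability claim underpinning the ``only if'' direction of (ii): one must be sure that making all children of a chosen vertex avoid it does not create a conflict or destroy the tree structure. This rests entirely on the independence statement established in part (i) — that \emph{every} assignment of one parent to each non-root vertex yields a valid spanning tree in a single-rooted acyclic network — so it is worth proving (i) with enough care (in particular the connectivity-via-backward-walk argument) that it can be invoked cleanly here.
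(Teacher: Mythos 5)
Your proposal is correct, and part (i) is essentially the paper's argument: both count spanning trees by observing that each non-root vertex independently selects one of its $\deg^-$ parents (the paper phrases this as deleting $k-1$ of the $k$ occurrences of each repeated integer in $\C$, you phrase it as choosing an in-edge and verify via the backward-walk-to-the-root argument that every such choice is a genuine spanning tree --- a check the paper leaves implicit). For part (ii) the two proofs diverge in the ``only if'' direction. The paper argues: if $N$ is not tree-child it has a non-visible vertex $v$; take such a $v$ maximally distant from the root, delete its out-arcs, and extend to a spanning tree in which $v$ is a leaf. You instead pick a vertex $v$ all of whose children are reticulations and have each child select one of its \emph{other} parents, which is legitimate precisely because of the independence established in part (i); the resulting spanning tree has $v$ as a leaf not in $[n]$. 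Your route is more elementary and self-contained: it needs no appeal to visibility, and in particular it does not rely on the implication ``not tree-child $\Rightarrow$ some vertex is not visible,'' which is the \emph{converse} of the cited result that tree-child networks have all vertices visible and would itself require justification. The ``if'' directions agree (every set of a tree-child cover contains a uniquely occurring integer, so every embedded partition is full, via \Cref{t:treechild.covers}), and both proofs recover \Cref{c:binary.treechild.support} in the same way. In short: same skeleton, but your construction of the offending spanning tree is cleaner and closes a step the paper handles only sketchily.
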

\begin{proof}
Part (i):     A reticulation vertex $x$ is an integer contained in $k > 1$ subsets of $\C$ (\Cref{tab:dictionary}) and a spanning tree is an embedded partition (\Cref{tab:dictionary-spanning.tree}). Thus, to obtain an embedded partition from a cover, we have to remove $k - 1$ instances of $x$ from $\C$. This can be done in ${k \choose k-1} = k$ different ways, and each choice is independent of the others. Since $k$ is also the in-degree for vertex $x$, it follows than the number of embedded partitions (spanning trees) is $\prod_x \textit{in-degree}(x)$, where $x$ is a reticulation vertex. If $x$ is a tree vertex, then in-degree$(x) = 1$ and, therefore, it does not contribute to the product. 
 
Part (ii): By \Cref{t:treechild.covers},  every subset of a tree-child cover has at least one element that is not present in any other subset.  This implies that every embedding partition must contain at least one element for each subset; hence, it has the same size as $|\C|$.
 
To show the forward direction, suppose that $N$ is not a tree-child network. We will show that there must be a spanning tree for $N$ that is not a support tree. If $N$ is not tree-child, then it has at least one vertex that is not visible. Let $v$ be a non-visible vertex that is maximally distant from the root, so that all vertices descended from $v$ are visible. If we delete each arc out of $v$, then there is a path from the root to each vertex,  so $N$ has a spanning tree $T$. However, in this tree, $T$ has $v$ as a leaf. The tree $T$ is therefore a spanning tree  of $N$ and not all its leaves are in $X$, so $T$ is not a support tree.
\end{proof}

\section{Normal networks}\label{s:normal}

Normal networks are a subclass of the tree-child networks, with the added constraint that they contain no ``shortcuts"~\cite{willson2010properties}.  A \emph{shortcut} is an edge $(u,v)$ for which there is an alternative directed path from $u$ to $v$ in the network.

To capture this information in terms of covers, we need a way to record paths in that context.  This motivated the definition of backtrack (\Cref{d:backtrack}), which requires the labelling order that was defined in~\Cref{s:prelim}.
The backtrack algorithm identifies a path from the vertex labelled $x$ back to the root, expressing the path in terms of a sequence of sets in the cover.  The edges between the parent vertices that correspond with these sets defines the path.

\begin{eg}\label{eg:backtrack}
    Recall the cover $\C=1\mid 2\mid 3\mid 4,5\mid 6,8\mid 6,7\mid 7,8\mid 11,12\mid 9,13\mid 10,13\mid 14,15$ from~\Cref{eg:embedded.partitions} for the network in~\Cref{f:network}. This cover has the labelling order
    \[\C=\{1\}_6,\{2\}_7,\{3\}_8,\{4,5\}_9,\{6,7\}_{10},\{6,8\}_{11},\{7,8\}_{12},\{11,12\}_{13},\{9,13\}_{14},\{10,13\}_{15},\{14,15\}_{\rho}.\]
A backtrack for $x=3$ starts with a subset containing $8$ (with the label of $\{3\}$ in the labelling order).  There are two choices; suppose we pick $\{7,8\}$.  The label of $\{7,8\}$ is 12, so now we must find a set containing $12$. There is only one, so we add $\{11,12\}$ to the backtrack sequence.  $\{11,12\}$ has label 13, so we look for a set containing $13$ and choose one of the two options, say $\{10,13\}$.  This has label 15 in the order, so we look for a set containing $15.$  There is one, namely $\{14,15\}$, and its label is $\rho$, which means we terminate the algorithm and output the backtrack sequence
    \[\{7,8\},\{11,12\},\{10,13\},\{14,15\}.\]
    Note, each such backtrack defines a path from 3 to the root $\rho$; in this case, $3\to 8\to 12\to 13\to 15\to\rho$.
\end{eg}

\begin{thm}\label{t:labellingorder.shortcuts}
    Let $N$ be a phylogenetic network with expanding cover $\C$, in labelling order.  Then $N$ has a shortcut if and only if there is a backtrack for an $x\in[m]$ that includes a subset containing $x$.
\end{thm}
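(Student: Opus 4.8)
The plan is to translate both the hypothesis and the conclusion into statements about directed paths in $N$, using the dictionary already in place: a set containing $x$ corresponds to a parent of $x$, and the label of a set is the vertex whose children that set records. As observed after \Cref{eg:backtrack}, a backtrack $S_1,\dots,S_t$ for $x$ encodes a directed path that descends from the root to $x$. Writing $p_i$ for the label of $S_i$ and $p_0$ for the parent of $x$ recorded in the first step of \Cref{a:labellingorder.backtrack}, the defining conditions of \Cref{d:backtrack} ($p_0\in S_1$ and $p_i\in S_{i+1}$) say exactly that $p_0$ is a child of $p_1$ and each $p_i$ is a child of $p_{i+1}$; together with the edge $(p_0,x)$ these give the descending path $p_t\to p_{t-1}\to\cdots\to p_0\to x$. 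The theorem then reduces to the remark that ``the backtrack includes a subset containing $x$'' says precisely that some $p_i$ on this path, with $i\ge 1$, is itself a parent of $x$.

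For the forward direction, suppose a backtrack $S_1,\dots,S_t$ for $x$ satisfies $x\in S_i$ for some $i\ge 1$. Then $p_i$, the label of $S_i$, is a parent of $x$, so $(p_i,x)$ is an edge of $N$. On the other hand, the initial segment of the backtrack supplies the directed path $p_i\to p_{i-1}\to\cdots\to p_1\to p_0\to x$, of length $i+1\ge 2$. This is an alternative directed path from $p_i$ to $x$, so the edge $(p_i,x)$ is a shortcut.

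For the converse, suppose $(u,v)$ is a shortcut, witnessed by an alternative directed path $u\to w_1\to\cdots\to w_s\to v$ with $s\ge 1$. I put $x=v$ and read this path upward: the set of children of $w_s$ contains $v$ and is the set that initialises a backtrack for $v$, whose successive output sets are the children-sets of $w_{s-1},\dots,w_1,u$, each containing the label of the previous one. The final such set, the children-set of $u$, contains $v$ because $(u,v)$ is itself an edge, and it occurs as $S_s$, a genuine output set since $s\ge 1$. Hence this backtrack for $x=v$ includes a subset containing $x$.

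The main point to get right is the bookkeeping at the two ends rather than any deep difficulty. I must check that the set I exhibit containing $x$ really is one of the output sets $S_1,\dots,S_t$ and not the initial set containing $x$ that \Cref{a:labellingorder.backtrack} discards before entering its loop; this is exactly where the condition $s\ge 1$ (equivalently, that the alternative path has length at least two) is used in the converse, and where $i\ge 1$ is used in the forward direction. I also need to confirm that the alternative path is genuinely distinct from the edge it shortcuts, i.e. that its vertices are distinct and none equals $x$ so that it does not reuse the edge $(p_i,x)$; this is immediate from $N$ being acyclic, since the vertices appearing along a backtrack are strict ancestors of $x$ and a repeated vertex would force a directed cycle.
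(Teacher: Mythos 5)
Your proof is correct and takes essentially the same route as the paper's: both directions translate a backtrack into the descending root-to-$x$ path it encodes, and identify the shortcut edge $(p_i,x)$ (resp.\ $(u,v)$) with the occurrence of $x$ in one of the backtrack's output sets. Your explicit bookkeeping about which sets are genuine output sets (the conditions $i\ge 1$ and $s\ge 1$) merely spells out what the paper's phrase ``non-trivial path'' leaves implicit.
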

\begin{proof}
    Suppose $N$ has a shortcut.  Then there is a vertex $x$ with a non-trivial path from some vertex $v$ to $x$, and there is also an edge $(v,x)$.  The existence of a non-trivial path from $v$ to $x$ means that the cover has a non-trivial backtrack from $x$, which includes the children of $v$ as a set. However, $x$ is also a child of $v$, so $x$ is in a set in the backtrack. 

    Conversely, suppose that the cover contains a backtrack for $x$ that includes a set $S$ containing $x$.
    Let $v$ be the label of the parent of $S$.  Then $x$ is a child of $v$, meaning there is an edge $(v,x)$ in $N$. However, the backtrack provides a non-trivial path in $N$ from $v$ to $x$ through $S$.  That is, $N$ contains a shortcut.
\end{proof}

\begin{cor}\label{c:labellingorder.normal}
    Let $\C$ be a cover in labelling order for a tree-child network. Then $\C$ is a cover for a normal network if and only if, for all $x \in [m]$, no backtrack for $x$ has a subset that contains $x$.
\end{cor}

Without loss of generality, in \Cref{t:labellingorder.shortcuts} and \Cref{c:labellingorder.normal}, we can assume that $x$ is a reticulation vertex (i.e., a value in $[m]$ that is contained in more that one subset of $\C$), since, by definition, reticulations have in-degree greater than one and thus are the only vertices that can have shortcuts. 

Using \Cref{t:labellingorder.shortcuts}, we can construct an algorithm that removes all the shortcuts from a cover.
This implies that, given a tree=child network, we can transform it to a normal network by removing all the shortcuts via \Cref{a:remove.shortcuts}.

\begin{algorithm}[ht]
    \caption{Remove all shortcuts from a cover}
    \begin{algorithmic}
    \Require {Expanding cover $\C$ in labelling order}
        \Procedure{RemoveShortcuts}{$\C$}
        \State Compute all backtracks for all reticulation vertices
        \For{backtrack $\beta$ for reticulation vertex $x$}
            \For{$s \in \beta$}
                \If{$x \in s$}
                    \State Remove $x$ from $s$
                \EndIf
            \EndFor
        \EndFor
        \EndProcedure
    \end{algorithmic}
    \label{a:remove.shortcuts}
\end{algorithm}

\begin{table}[ht]
    \centering
    \begin{tabular}{lp{10cm}}
         \hline
         \textbf{Network} & \textbf{Cover}  \\ 
         \hline
         Shortcut to $x$  & A backtrack of $x$ that includes a set containing $x$.\\
         \hline\\
    \end{tabular}
    \caption{A translation of a shortcut into a feature of the corresponding expanding cover. }
    \label{tab:shortcut}
\end{table}

\section{Tree-sibling networks} \label{sec:treesibling}

Tree-sibling networks are also amenable to a description in terms of covers.

\begin{defn}[\cite{cardona2008distance}]
    A tree-sibling network is a network in which every reticulation vertex is a sibling of a tree vertex.
\end{defn}

\begin{thm}\label{t:treesibling.covers}
    Tree-siblings networks are in bijection with those expanding covers 
    for which every repeated integer lies in at least one set with an integer that appears only once.
\end{thm}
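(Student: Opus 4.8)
The plan is to mirror the strategy used for tree-child networks in \Cref{t:treechild.covers}, relying on the dictionary entries of \Cref{tab:dictionary} that translate reticulation vertices into repeated integers, tree vertices into integers appearing exactly once, and siblings of $x$ into the other integers sharing a set with $x$. The definition of a tree-sibling network---every reticulation is a sibling of a tree vertex---then translates almost verbatim into the stated cover condition, so the bulk of the argument is simply checking that these translations are faithful in both directions.

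First I would establish the forward direction. Let $N$ be a tree-sibling network with expanding cover $\C$, and let $x$ be any reticulation vertex of $N$; by the dictionary, $x$ is an integer that appears in more than one set of $\C$. Since $N$ is tree-sibling, $x$ has a sibling $y$ that is a tree vertex. By the dictionary, $y$ is one of the other integers lying in a set that contains $x$, and being a tree vertex means $y$ appears exactly once in $\C$. Hence $x$ shares a set with an integer appearing only once. As this holds for every reticulation, it holds for every repeated integer, which is exactly the asserted cover property.

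For the reverse direction I would run the same translation in the opposite order. Suppose $\C$ is an expanding cover in which every repeated integer shares a set with some integer appearing exactly once, and let $N$ be the labellable network obtained from $\C$. Let $x$ be any reticulation of $N$, that is, a repeated integer. By hypothesis there is a set $S\in\C$ containing $x$ together with an integer $y$ that appears exactly once. Then $y$ is a sibling of $x$ (they share the set $S$) and, appearing only once, $y$ is a tree vertex. Thus every reticulation of $N$ has a tree-vertex sibling, so $N$ is tree-sibling.

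I expect no serious obstacle here, since the correspondence is essentially a restatement of the dictionary. The one point requiring a little care is the treatment of siblings: a single reticulation $x$ has several parents and hence several sibling sets, whereas the tree-sibling condition only requires a tree vertex in \emph{one} of these sets. This matches the cover condition's phrase ``at least one set,'' so the quantifiers line up correctly, but it is worth noting explicitly so that the reader sees that a reticulation need not be a sibling of a tree vertex under every one of its parents.
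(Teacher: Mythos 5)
Your proposal is correct and takes essentially the same route as the paper, which also proves this theorem as a direct translation of the tree-sibling definition via the dictionary in \Cref{tab:dictionary}; you simply spell out the two directions more explicitly. Your closing remark about the quantifier ``at least one set'' matching the fact that a reticulation need only have a tree-vertex sibling under one of its parents is a worthwhile clarification that the paper leaves implicit.
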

\begin{proof}
    The statement is a direct translation of the definition of tree-sibling into the language of covers, {according to \Cref{tab:dictionary}}.
    Reticulation vertices are those that appear more than once in the cover, and vertices are siblings when they appear in the same set in the cover. 
\end{proof}

We have already seen a characterisation of tree-child networks using covers in~\Cref{t:treechild.covers}.  Covers for tree-child networks are those for which every set has a uniquely appearing element.   However, there is a close connection between tree-child and tree-sibling networks, which can be captured in a cover description for tree-child networks, as follows.

\begin{thm}\label{t:treesichild.from.treesibling}
    Tree-child networks are in bijection with expanding covers for which, for every
    repeated element $k$ in $\C$, {every} subset
    containing $k$ also contains an integer that appears only once.
\end{thm}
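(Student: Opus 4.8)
The plan is to prove this by showing that the cover condition stated here is logically equivalent to the one already in \Cref{t:treechild.covers}, which establishes a bijection between tree-child networks and the expanding covers in which every set contains an integer appearing exactly once. Since the two conditions then cut out the same family of expanding covers, the bijection transfers verbatim and nothing about the network construction needs to be revisited. Accordingly, I would introduce shorthand: call (A) the property from \Cref{t:treechild.covers} (every set of $\C$ contains a uniquely appearing integer) and (B) the property stated here (for every repeated element $k$, every set containing $k$ contains a uniquely appearing integer), and reduce the claim to proving (A) $\iff$ (B).

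The direction (A) $\Rightarrow$ (B) is immediate: (A) guarantees a uniquely appearing integer in \emph{every} set, so in particular in every set that happens to contain a repeated element $k$. The reverse implication (B) $\Rightarrow$ (A) is the step that needs the (very light) care, and it proceeds by a case split on an arbitrary set $S\in\C$. If $S$ contains some repeated integer $k$, then $S$ is one of the sets containing $k$, so (B) hands us a uniquely appearing element of $S$. If instead $S$ contains no repeated integer, then, since $S$ is non-empty (covers consist only of non-empty sets), every element of $S$ appears exactly once, and any one of them witnesses (A). Hence every set of $\C$ possesses a uniquely appearing element, which is precisely (A).

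Having established (A) $\iff$ (B), the result follows by applying \Cref{t:treechild.covers}: a cover satisfies (B) exactly when it satisfies (A), and the covers satisfying (A) are exactly those corresponding to tree-child networks. I do not expect any genuine obstacle here; the only substantive point is the second case of the (B) $\Rightarrow$ (A) split, which is exactly what explains why weakening the hypothesis to range only over sets containing repeated elements costs nothing — a set built entirely from uniquely appearing integers satisfies the conclusion of (A) automatically. (If one preferred to avoid invoking \Cref{t:treechild.covers}, the same argument could instead be run directly against the definition of tree-child via \Cref{tab:dictionary}, translating ``every non-leaf vertex has a tree-vertex child'' into (B); but routing through the already-proved equivalent condition (A) is the shortest path.)
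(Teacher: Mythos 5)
Your proposal is correct and follows essentially the same route as the paper: both reduce the statement to the equivalence with the condition of \Cref{t:treechild.covers} and handle the nontrivial direction by the same case split (a set with no repeated element consists entirely of uniquely appearing integers, hence satisfies the conclusion vacuously via its non-emptiness). Your write-up is, if anything, slightly more explicit than the paper's about why non-emptiness of the sets is what makes the second case go through.
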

\begin{proof}
    We will prove that this statement is equivalent to \Cref{t:treechild.covers}.
    
   For the forward direction, suppose that a cover satisfies the condition in~\Cref{t:treechild.covers}. If every subset contains a uniquely occurring integer, then all subsets that contain a reticulation will do also do so. 

    For the backward direction, by assumption, every subset that contains a reticulation vertex has an integer that is not contained in another subset. This implies that all other subsets do not contain a reticulation vertex, and therefore, it contains a tree vertex that is not contained in any other subset (\Cref{tab:dictionary}).
\end{proof}
In other words, tree-child networks are networks in which every parent of a reticulation vertex has a tree-vertex as a child. Therefore, we recover the well-known fact that all tree-child networks are tree-sibling networks.

\section{Orchard networks}
\label{s:orchard}

Orchard networks are non-degenerate phylogenetic networks defined by the property that they can be reduced to a trivial network (a single vertex) by a series of cherry or reticulated cherry reductions~\cite{erdos2019class,janssen2021cherry,van2022orchard}.  In the present paper, we will restrict our attention to \emph{binary} orchard networks.

A \emph{cherry} is a pair of leaves that are siblings; a \emph{reticulated cherry} is a pair of leaves, one of which has a reticulate parent and the other is the sibling of that reticulate parent.
Cherry reduction involves replacing the cherry with a single vertex.  Reticulated cherry reduction involves deleting the arc between the parents of the two leaves and then suppressing degree-2 vertices.   
By a theorem of~\cite{erdos2019class,janssen2021cherry}, for orchard networks, the order in which these are performed is not important. 

To translate this definition into covers, we need to first characterise cherries and reticulated cherries as they are manifested in covers, and then describe the action of such reductions in terms of the cover.  The first of these requirements is routine; the second, not, as it requires us to augment the cover with its set of leaves.  We will describe a test for orchard that reduces an expanding cover to a trivial cover but, along the way, passes through covers that are not expanding.

In covers, a cherry is given by a set consisting of two elements of $[n]$ (the leaves), whereas a reticulated cherry is given by a singleton subset of $[n]$ appearing in position $j$ in the labelling order, and a pair $\{n+j,i\}$ where $i\in [n]$ (summarized in~\Cref{tab:orchard}).  An example is shown in~\Cref{f:cherries}.

\begin{figure}[ht]
    \centering
    \includegraphics{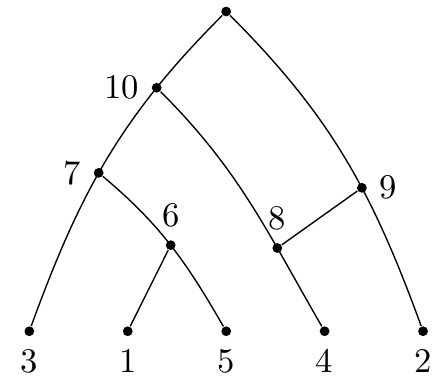}
    \caption{A network with a cherry and a reticulated cherry. The cover, in labelling order, is $(\{1,5\},\{3,6\},\{4\},\{2,8\},\{7,8\},\{9,10\})$.  The cherry can be identified in the cover as a pair of integers that are a subset of the leafset $[5]$.  In this cover, a cherry is $\{1,5\}$.  The reticulated cherry is identified in the cover as a pair of sets: one is a singleton subset of the leafset, in position $j$;  the other is the pair $\{n+j,i\}$ with $i$ in the leafset.  In this cover, there is a reticulated cherry consisting of the singleton $\{4\}$ (contained in the leafset), which appears in position 3 in the labelling order, and the pair $\{2,8\}$, noting that $8=n+3$ and $2$ is in the leafset.}
    \label{f:cherries}
\end{figure}

\subsection{The cherry reduction process via covers}

The cherry reduction test for orchard networks can be defined efficiently using covers by keeping track of the changing set of leaf labels $\LL$ within the algorithm, as follows.
Identifying a cherry or reticulated cherry in a cover can be done using the translations given in~\Cref{tab:orchard}.  The process in~\Cref{alg:orchard.test} chooses to reduce a cherry first, if there is one, as it involves fewer checks.

\begin{algorithm}[ht]
    \caption{Test whether the expanding cover $\C$ corresponds to an orchard network}\label{alg:orchard.test}
    \begin{algorithmic}
    \Require {Expanding cover $\C$ in labelling order}
    \Procedure{IsOrchard}{$\C$}
        \State $n\gets ||\C||-|\C|+1$
        \State $\LL\gets [n]$
        \State $reduced \gets true$
        \While{$reduced = true$ \textbf{and} $|\C|>0$}
            \State $reduced \gets false$
            \If{there is a set of form $\{a,b\}_j\in\C$ with $a,b\in\LL$} 
                \State $\C\gets\C\setminus\{a,b\}$ \Comment{Cherry reduction}
                \State $\LL\gets(\LL\setminus\{a,b\})\cup\{j\}$
                \State $reduced \gets true$
            \Else
            \If{there is a set of form $\{a\}_j$ in $\C$ and a set of form $\{j,b\}\in\C$, with $a,b\in\LL$}    
                    \State $\C\gets\C\setminus\{\{a\},\{j,b\}\}$ \Comment{Reticulated cherry reduction}
                    \State $\LL\gets (\LL\setminus\{a,b\})\cup\{j,k\}$
                    \State $reduced \gets true$
                
            \EndIf
            \EndIf
        \EndWhile
        \If{$\C=\emptyset$}\label{alg.ref:go.to.output}
            \State \textbf{return} ``$\C$ is orchard''
        \Else 
            \State \textbf{return} ``$\C$ is not orchard''
        \EndIf
    \EndProcedure
    \end{algorithmic}
    \label{a:test.orchard}
\end{algorithm}

In general, the set $\C$ that is redefined during~\Cref{alg:orchard.test} may not be an expanding cover, but these processes do nevertheless model the network cherry and reticulated cherry reduction steps, applied to a labelled network.

\begin{thm}
    \Cref{alg:orchard.test} determines whether the network from the expanding cover $\C$ is orchard.
\end{thm}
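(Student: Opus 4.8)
The plan is to prove correctness by establishing a step-by-step correspondence between the cover reductions carried out in \Cref{alg:orchard.test} and the network reductions that define orchard networks, and then to lift this to a statement about whether the network reduces all the way to the trivial network. Throughout, I would maintain the invariant that after each pass of the while loop the pair $(\C,\LL)$ encodes a labelled (binary) network $N'$, with $\LL$ as its current leaf set, obtained from $N$ by a sequence of cherry and reticulated-cherry reductions, up to relabelling of leaves. Initially $N'=N$, $\LL=[n]$, and $\C$ is the expanding cover of $N$. A key point to flag at the outset is that after the first reduction $\C$ need no longer be expanding, so I must argue entirely at the level of the labelled graph $N'$ rather than by reapplying the bijection of \cite[Thm.~4.4]{francis2023labellable} to the intermediate covers; it is only the initial $\C$ that is guaranteed expanding.

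First I would verify the detection step using the translations in \Cref{tab:orchard}: a cherry of $N'$ is exactly a set $\{a,b\}\in\C$ with $a,b\in\LL$, and a reticulated cherry is exactly a pair of sets $\{a\}_j$ and $\{j,b\}$ (the latter with label $k$) in $\C$ with $a,b\in\LL$. Hence the algorithm locates a reducible pair precisely when $N'$ has one. Next I would check faithfulness of each reduction at the cover level. The cherry case is routine: deleting $\{a,b\}$ and setting $\LL\gets(\LL\setminus\{a,b\})\cup\{j\}$ promotes the former parent $j$ to a leaf, which is exactly the network cherry reduction. The reticulated-cherry case is the delicate one. The network operation deletes the arc between the two parents and suppresses the two resulting degree-$2$ vertices, which in cover terms would substitute the surviving leaf $a$ for $j$ in the other set containing $j$, and $b$ for $k$ in the set containing $k$. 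The algorithm instead simply deletes $\{a\}$ and $\{j,b\}$ and sets $\LL\gets(\LL\setminus\{a,b\})\cup\{j,k\}$; I would show that this yields the reduced network up to the leaf relabelling $a\mapsto j$, $b\mapsto k$, so that $j$ and $k$ now play the roles of the two surviving leaves. Since orchard-ness depends only on the topology and not on leaf names, this book-keeping shortcut is harmless. Verifying this relabelling argument precisely is the step I expect to require the most care.

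With the correspondence in hand, termination is immediate, since each cherry reduction removes one set and each reticulated-cherry reduction removes two, so $|\C|$ strictly decreases; the loop therefore halts either with $\C=\emptyset$ or at a network with no cherry and no reticulated cherry, i.e.\ at a reduction normal form. Soundness is then easy: if the algorithm reaches $\C=\emptyset$, the performed reductions witness a reduction of $N$ to the trivial network, so $N$ is orchard. For the converse I would invoke the order-independence theorem of \cite{erdos2019class,janssen2021cherry}: every maximal sequence of reductions applied to $N$ terminates at the same normal form, and $N$ is orchard if and only if that normal form is the trivial network. The algorithm executes one particular maximal sequence (cherries first, then an arbitrary reticulated cherry), stopping exactly at a normal form, which under the correspondence is trivial if and only if $\C=\emptyset$. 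Hence the algorithm returns ``orchard'' precisely when $N$ is orchard. The main obstacle is thus twofold: pinning down the reticulated-cherry relabelling book-keeping so that the deletion of two sets genuinely reproduces the suppressed network, and correctly appealing to confluence so that the algorithm's single greedy reduction path certifies membership in, or exclusion from, the orchard class regardless of the choices it makes.
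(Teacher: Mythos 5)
Your proposal is correct and follows essentially the same route as the paper: establish that the cover-level detection and deletion steps faithfully mirror cherry and reticulated-cherry reductions on the underlying labelled network, then invoke the order-independence result of \cite{erdos2019class,janssen2021cherry} so that the algorithm's single greedy maximal reduction sequence certifies orchard-ness via whether it terminates at $\C=\emptyset$. The only difference is one of detail: you explicitly work out the leaf-relabelling book-keeping in the reticulated-cherry case (that $j$ and $k$ stand in for the surviving leaves after arc deletion and suppression), a point the paper delegates entirely to \Cref{f:cherry.reductions}.
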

 
\begin{proof}
    A network is orchard, by definition, if and only if it can be reduced to a trivial network by cherry or reticulated cherry reductions. According to a result of~\cite{erdos2019class,janssen2021cherry}, the order of such reductions is not important.  The procedures in~\Cref{alg:orchard.test} exactly reflect the effect on the cover of these operations on the network, as can be seen in~\Cref{f:cherry.reductions}.
\end{proof}

\begin{figure}[ht]
    \centering
    \includegraphics[width=\textwidth]{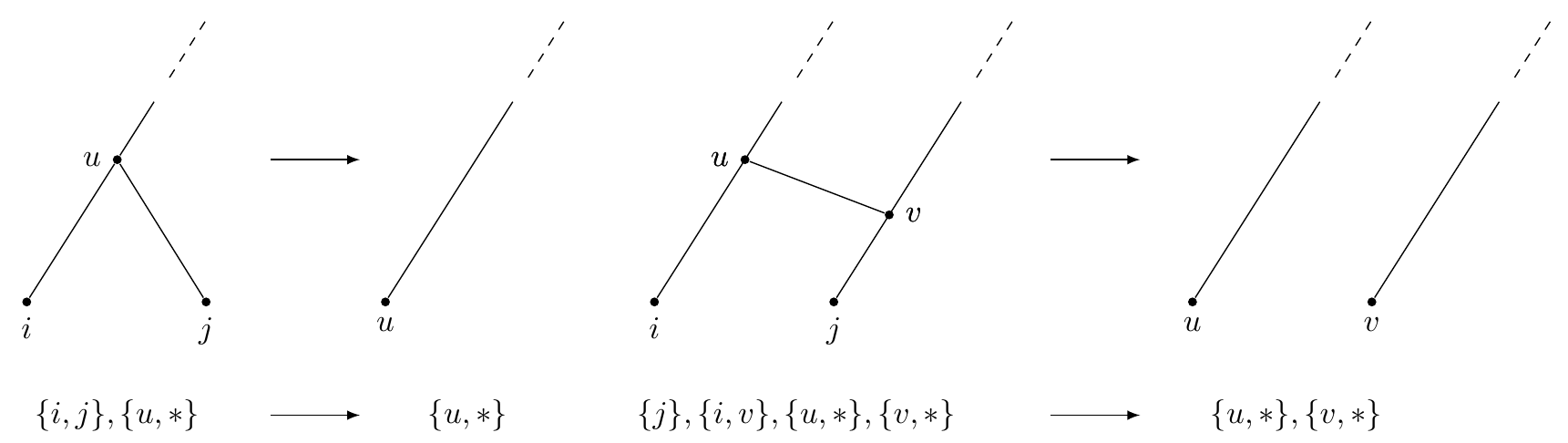}
    \caption{Cherry (left) and reticulated cherry (right) reductions and their effects on the covers.  Here, $\ast$ represents other sibling vertices, which could be an empty set. }
    \label{f:cherry.reductions}
\end{figure}

\begin{eg}\label{eg:cherry-reduction}
    The cherry reduction process in~\Cref{alg:orchard.test}, applied to the cover for the network in~\Cref{f:cherries}, proceeds as described in~\Cref{tab:cherry.reduction}.

\begin{table}[ht]
    \centering
\begin{tabular}{|l|p{11cm}|c|}
\hline
\multicolumn{3}{|p{\textwidth}|}{Cherry reduction of the cover $\C$ in~\Cref{eg:cherry-reduction}, following~\Cref{alg:orchard.test}.}\\ \hline
\multicolumn{3}{|p{\textwidth}|}{$\C=\left\{\{1,5\}_6,\{3,6\}_7,\{4\}_8,\{2,8\}_9,\{7,8\}_{10},\{9,10\}_\rho\right\}$;}\\
\multicolumn{3}{|p{14cm}|}{$\cL=\{1,2,3,4,5\}$.}\\
\hline
\multirow{2}{*}{1} & \multicolumn{2}{|p{15cm}|}{
$\C$ contains the cherry $\{1,5\}_6$ (and the reticulated cherry, $\{4\}_8,\{2,8\}_9$).  We reduce the cherry.}  \\
\cline{2-3}
    & 
        $\begin{aligned}[b]
         \C_1&=\C\setminus\{\{1,5\}_6\}\\
         &=\left\{\{3,6\}_7,\{4\}_8,\{2,8\}_9,\{7,8\}_{10},\{9,10\}_\rho\right\} \\
         \cL_1&=(\cL\setminus\{1,5\}) \cup\{6\}\\
         &=\{2,3,4,6\}
        \end{aligned}$ 
    & {\includegraphics[width=2.5cm]{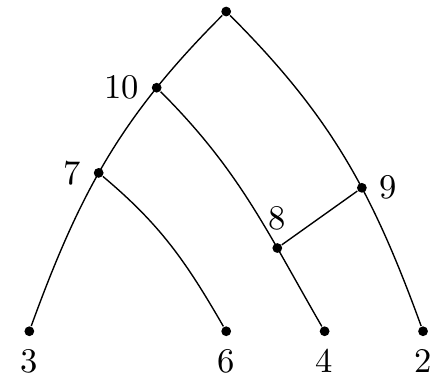}}\\
\hline
\multirow{2}{*}{2} & \multicolumn{2}{|p{15cm}|}{
$\C_1$ contains cherry $\{3,6\}_7$ (and the reticulated cherry $\{4\}_8,\{2,8\}_9$). We reduce the cherry.} \\
\cline{2-3}
    & 
    $\begin{aligned}[b]
    \C_2&=\C_1\setminus\{\{3,6\}_7\}\\  
        &=\left\{\{4\}_8,\{2,8\}_9,\{7,8\}_{10},\{9,10\}_\rho\right\}.\\ 
    \cL_2&=(\cL_1\setminus\{3,6\}) \cup\{7\}\\
        &=\{2,4,7\}.    
    \end{aligned}$
    &     \includegraphics[width=2.5cm]{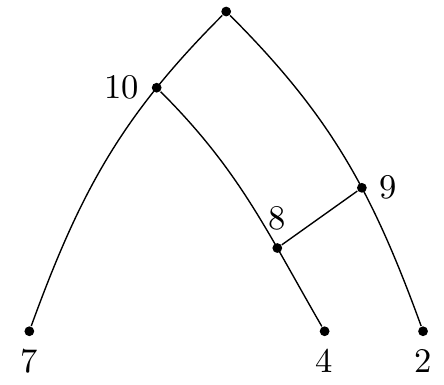}\\    
\hline
\multirow{2}{*}{3} & \multicolumn{2}{|p{15cm}|}{
$\C_2$ contains no cherry, but contains the reticulated cherry $\{4\}_8,\{2,8\}_9$, which we reduce.  } \\
\cline{2-3}
    & 
    $\begin{aligned}[b]
    \C_3&=\C_2\setminus \{\{4\}_8,\{2,8\}_9\}\\
        &=\left\{\{7,8\}_{10},\{9,10\}_\rho\right\}.\\ 
    \cL_3&=(\cL_2\setminus\{2,4\})\cup\{8,9\}\\
        &=\{7,8,9\}.
    \end{aligned}$
    &     \includegraphics[width=2.5cm]{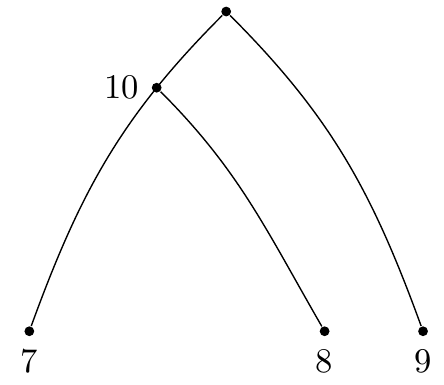}\\    
\hline
\multirow{2}{*}{4} & \multicolumn{2}{|p{15cm}|}{
$\C_3$ contains the cherry $\{7,8\}_{10}$, which we reduce.} \\
\cline{2-3}
    & 
    $\begin{aligned}[b]
    \C_4&=\C_3\setminus \{7,8\}_{10}\\
        &=\left\{\{9,10\}_\rho\right\}.\\
    \cL_4&=(\cL_3\setminus \{7,8\})\cup\{10\}\\
        &=\{9,10\}.
    \end{aligned}$
    &     \includegraphics[width=2.5cm]{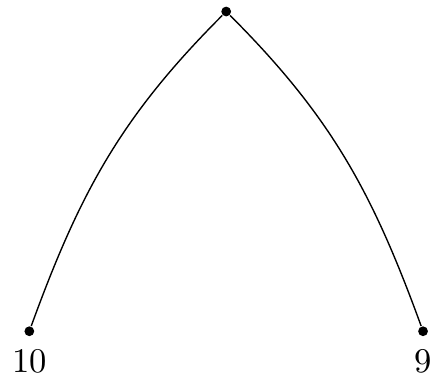}\\    
\hline
\multirow{2}{*}{5} & \multicolumn{2}{|p{15cm}|}{
$\C_4$ contains (only) the cherry $\{9,10\}_\rho$, which we reduce.} \\
\cline{2-3}
    & 
$\C_5=\C_4\setminus \{9,10\}_\rho = \emptyset$, which means the algorithm ends.
    &      \\
\hline
\end{tabular}    
\caption{Cherry reduction algorithm acting on the cover in~\Cref{eg:cherry-reduction} via~\Cref{alg:orchard.test}, with the effects of reduction on the network shown at right (for illustration only). }
\label{tab:cherry.reduction}
\end{table}

\end{eg}

\begin{table}[ht]
    \centering
    \begin{tabular}{lp{10cm}}
         \hline
         \textbf{Network} & \textbf{Cover}  \\ 
         \hline
         Cherry & A set consisting of two elements of $[n]$ \\
         Reticulated cherry & A singleton subset of $[n]$ appearing in position $j$ in the labelling order, and a pair $\{n+j,i\}$ where $i\in [n]$ \\
         \hline\\
    \end{tabular}
    \caption{A translation of features that are relevant to orchard networks into features of the corresponding expanding cover. }
    \label{tab:orchard}
\end{table}

\section{A new class of network detected through the lens of covers}
\label{s:new.classes}
\label{s:spinal}

We have used covers to describe several classes of phylogenetic network.  However, the encoding into covers also creates the opportunity to define new classes of network that correspond to particular features of covers.  Such classes might currently have little direct utility for application to phylogenetics, but they may have an indirect value in that algorithms and methods using covers may involve such classes in passing.
We introduce one such class as an example of this opportunity.

Recall that the definition of an expanding cover has two criteria (\Cref{d:expanding.cover}).  The first is that elements of the leafset $[n]$ are not repeated, and the second ensures that the labelling algorithm is well-defined by requiring at least $i$ subsets of $[n+i-1]$ to be in the cover.

If a cover contains \emph{exactly} $i$ subsets of $[n+i-1]$, it has a strong consequence for the network, as follows.  We define a \emph{spine} in a network to be a path from a leaf to the root that traverses all non-leaf vertices, and we call a network \emph{spinal} if it has a spine.

\begin{thm}\label{t:spinal.characterisation}
    A network is spinal if and only if its cover $\C$ has exactly $i$ subsets of $[n + i - 1]$, for each $i=1,\dots,|\C|$.
\end{thm}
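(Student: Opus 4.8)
The plan is to funnel both directions through a single structural condition on the labelling order of \Cref{d:labelling.order}. Write $C_1,\dots,C_{|\C|}$ for the sets of $\C$ in labelling order, so that the parent of $C_i$ carries label $n+i$ for $i<|\C|$ and $C_{|\C|}$ is the root's child-set. The condition I would isolate is
$(\star)$: for every $i\in\{2,\dots,|\C|\}$ the maximum element of $C_i$ equals $n+i-1$, together with the automatic fact that $C_1\subseteq[n]$ is nonempty. I would then prove separately that a network is spinal if and only if $(\star)$ holds, and that its cover has exactly $i$ subsets of $[n+i-1]$ for each $i$ if and only if $(\star)$ holds; chaining these yields the theorem. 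The one recurring tool is that in the labelling order every child receives a strictly smaller label than its parent, since $C_i\subseteq[n+i-1]$ forces each element of $C_i$ to be at most $n+i-1<n+i$ (and all integer labels lie below $\rho$).

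For the spinal side, I would first observe that any spine, read from the root to its terminal leaf, is a directed path whose $|\C|$ non-leaf vertices must appear in strictly decreasing label order, since each is the parent of the next and parents exceed their children in label. As the non-leaf labels are exactly $n+1<\dots<n+|\C|-1<\rho$, the spine is forced to be $\rho\to(n+|\C|-1)\to\cdots\to(n+1)\to\ell$ for some leaf $\ell$. This immediately gives $n+i-1\in C_i$ for each $i\ge 2$, and because every element of $C_i$ is at most $n+i-1$, membership upgrades to $\max C_i=n+i-1$, i.e.\ $(\star)$. Conversely, $(\star)$ supplies the edges $(n+i)\to(n+i-1)$ for $2\le i\le|\C|-1$ and $\rho\to(n+|\C|-1)$, while $C_1\subseteq[n]$ nonempty supplies an edge $(n+1)\to\ell$ to a leaf; concatenating them gives a directed path through all non-leaf vertices that ends at a leaf, i.e.\ a spine.

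For the cover side, I would pass to the increments of $N_i$, the number of sets of $\C$ contained in $[n+i-1]$. For $i\ge 2$ the difference $N_i-N_{i-1}$ counts the sets whose maximum is exactly $n+i-1$, and $N_1$ counts those contained in $[n]$. Since $\sum_i(N_i-N_{i-1})=N_{|\C|}=|\C|$ and the expanding property gives $N_i\ge i$, the statement ``$N_i=i$ for all $i$'' is equivalent to every increment equalling $1$. If $(\star)$ holds, then the maxima of $C_1,\dots,C_{|\C|}$ are a leaf value $\le n$ followed by $n+1,n+2,\dots,n+|\C|-1$, which are pairwise distinct, so each value is the maximum of exactly one set and every increment is $1$. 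Conversely, ``$N_i=i$ for all $i$'' means that at step $i$ exactly one unremoved set lies in $[n+i-1]$; since the $i-1$ previously removed sets are precisely the $i-1$ sets contained in $[n+i-2]$, the chosen $C_i$ cannot lie in $[n+i-2]$, hence has maximum exactly $n+i-1$, which is $(\star)$.

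The step I expect to demand the most care is the forward spinal implication: arguing rigorously that a spine is \emph{forced} to traverse the non-leaf vertices in decreasing label order, so that the only candidate spine is the one read directly off the labelling order. Once the ``children have smaller labels than parents'' property is in hand, everything else is bookkeeping, and the distinct-maxima count on the cover side is routine telescoping.
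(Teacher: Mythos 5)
Your proof is correct and follows essentially the same route as the paper's: both rest on the facts that in the labelling order the $i$th set is contained in $[n+i-1]$ and receives label $n+i$ (so labels strictly decrease along directed paths, forcing any spine to be $\rho\to(n+|\C|-1)\to\cdots\to(n+1)\to\ell$), and both use a pigeonhole on the maxima of the sets to match the spine with the counting condition. Your explicit intermediate condition $(\star)$ and the telescoping of the counts $N_i$ merely reorganize the paper's argument (which phrases the forward direction via backtracks with strictly increasing maxima) into a cleaner two-step equivalence.
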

\begin{proof}
We prove the reverse direction first.  Suppose that the cover $\C$ has exactly $i$ subsets of $[n + i - 1]$, for each $i=1,\dots,|\C|$, and consider its labelling order.
The first set in the labelling order is the unique set that is contained in $[n]$, and its label is $n+1$. 
For each $i=1,\dots,|\C|-1$, the $i$th set in the labelling order is contained in $[n+i-1]$, according to the expanding property, but is not contained in $[n+i-2]$, according to our assumption about $\C$. Therefore, it must contain the integer $n+i-1$.  The $i$th set in the labelling order has label $n+i$, which means that $n+i$ is a parent of $n+i-1$.  
Since this holds for each $i>1$, this determines a path from a leaf (labelled by an element of the first set in the labelling order) through every vertex with label $n+1$ to $n+|\C|-1$, and the last set, containing $||\C||=n+|\C|-1$, has the root as parent.  Thus the network is spinal.

We now prove the forward direction.  Suppose that $N$ is spinal with cover $\C$.  Being spinal means that $N$ has a path of length $|\C|$ from a leaf to the root.  This means that there is a backtrack of a leaf that has length $|\C|-1$.  That is, a sequence of sets from the cover such that the label of one set (from the labelling order on $\C$) is an element of the next set in the backtrack sequence. 
Because the label of a set in the cover is strictly greater than all the elements of the set, the maximal elements of the sets in a backtrack are {strictly increasing}.

Now consider the backtrack arising from the spine (the path from a leaf to the root traversing all non-leaf vertices).  
The leaf at the base of the spine must be in a set contained in $[n]$; otherwise, there would be no path from it to the vertex labelled $n+1$. Therefore, the first set in the backtrack contains $n+1$ as its maximal element because that is the parent label for the set containing the initial leaf. 
The spine has $|\C|-1$ vertices in it, including the initial leaf, because it includes all except $n-1$ of the vertices in the network (the network has $||\C||+1=|\C|+n$ vertices in total).  Therefore, the backtrack for the initial leaf has $|\C|-1$ sets.   
The maximal elements of these $|\C|-1$ sets are strictly increasing, and run from $n+1$ to $m=||\C||=|\C|+n-1=n+(|\C|-1)$. This forces each set in the backtrack to have a distinct maximal element.  Put together with the set containing the initial leaf, which is a subset of $[n]$, this means that there are exactly $i$ subsets of $[n+i-1]$, for each $i=1,\dots,|\C|$, as required.
\end{proof}

In the light of \Cref{t:spinal.characterisation}, we say that a cover $\C$ is \emph{spinal} if it contains {exactly} $i$ subsets of $[n+i-1]$, for each $i=1,\ldots, |\C|$. 
An example of a spinal network is shown in~\Cref{f:spinal}.
Spinal networks have some non-trivial intersections with other classes; for example, the spinal network
    $1 \mid 2 \mid 2,3 \mid 3,4$
is not a tree-child, tree-sibling, or orchard network. It can, however, be shown that the class of spinal networks lies within the intersection of the labellable and tree-based classes of networks.

\begin{figure}[ht]
    \centering
    \includegraphics[scale=0.9]{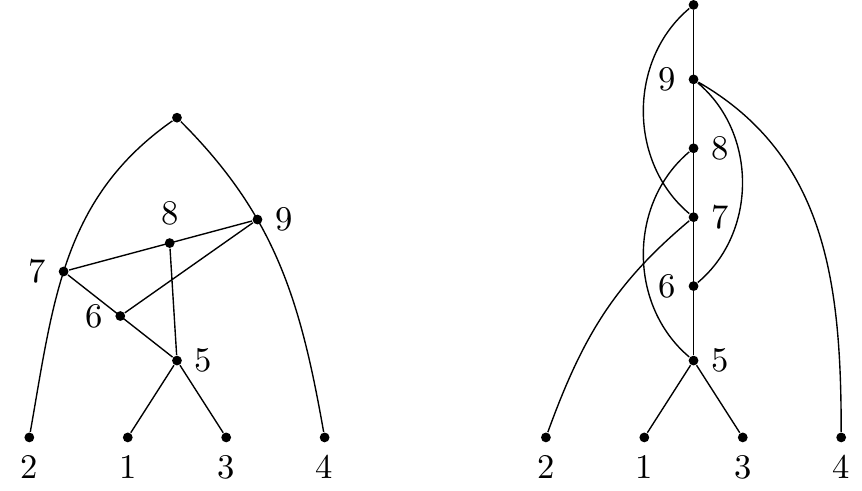}
    \caption{A spinal network with cover $1,3\mid 5\mid 2,6\mid 5,7\mid 4,6,8\mid 7,9$. Note that $n=4$ and the cover has one set in $[4]$, two in $[5]$, three in $[6]$, four in $[7]$, five in $[8]$, and six in $[9]$. There is a path from the elements of the set that is in $[4]$, namely 1 and 3, to the root, and this path traverses every non-leaf vertex.  Observe that this path is in labelling sequence.  The spine is particularly clear when the network is drawn as shown on the right. }
    \label{f:spinal}
\end{figure}

\begin{table}[ht]
    \centering
    \begin{tabular}{lp{10cm}}
         \hline
         \textbf{Network} & \textbf{Cover}  \\ 
         \hline
         Spine  & Exactly $i$ subsets of $[n+i-1]$, for each $i$\\
         \hline\\
    \end{tabular}
    \caption{A translation of the feature of spinal networks into a feature of the corresponding expanding cover. }
    \label{tab:spinal}
\end{table}

\section{Discussion}
\label{s:discussion}

Sometimes a relatively small shift in perspective can open up new possibilities in surprising ways.  What seems like a fairly straightforward idea in a paper by Diaconis and Holmes (the idea that rooted binary phylogenetic trees correspond to perfect matchings~\cite{diaconis1998matchings}), itself building on an elegant but simple way to label internal vertices~\cite{erdos1989applications}, was loosened slightly to yield a correspondence between phylogenetic forests and all partitions of finite sets, as well as a raft of interesting questions in semigroup theory~\cite{francis2022brauer}.  This subtle twist of an idea, like something from a Philip Pullman novel~\cite{pullman2015subtle}, seems to have opened up further opportunities that, with a further gentle twist, have opened a new canvas on which to draw phylogenetic networks~\cite{francis2023encoding}. 
Capturing the features that define different network classes  on this canvas provided the underlying motivation for this paper.

Many core features discussed in the context of networks, such as reticulations, paths, cherries, siblings, and so on, have been translated into the language of covers; a summary is given in~\Cref{tab:final.dictionary}.  These translations of features have been necessary for characterising several important classes of phylogenetic network in the language of covers.  This includes some of the most prominent classes, including normal, tree-child, tree-sibling, orchard, and tree-based networks (relationships among the classes, determined by properties of their covers, are represented in~\Cref{fig:diagram}). However there are many classes, each of which is important for its own reasons, and this list is not complete.  Some classes that have been omitted in the present paper might be difficult to define with covers (for instance, level-$k$ networks or HGT networks), whereas others might just be a matter of following through with the first steps we have taken here (for example, reticulation-visible networks, and non-binary orchard networks).

\begin{table}[ht]
    \centering
    \begin{tabular}{p{3.5cm}p{11.5cm}}
         \hline
         \textbf{Network $N$} & \textbf{Cover $\C$}  \\ 
         \hline\\[-5pt]
         Non-root vertex & An integer in $[m]$ \\
         Leaf & An integer in $[n]$ \\
         Tree vertex & An integer contained in just one subset \\
         Reticulation vertex & An integer contained in more than one subset \\
         In-degree of $x$ & The number of subsets that contain $x$ \\
         Out-degree of $x$ & Size of the subset with label $x$ in the labelling order \\
         Parents of $x$ & All the subsets that contain $x$ \\
         Siblings of $x$ & All the other integers contained in the subsets that contain $x$ \\
         Children of $x$ & The subset with label $x$ in the labelling order\\
         Spanning tree & A partition embedded in $\C$ \\
         Support tree & A full embedding of a partition in $\C$\\
         Crown & Collection of sets $a_0,a_1\mid a_1,a_2\mid\dots\mid a_{t-1},a_t\mid a_t,a_0$.\\
         Fence  & Collection of sets $a_0,a_1\mid a_1,a_2\mid\dots\mid a_t,a_{t+1}$, with $a_0\neq a_{t+1}$ both unique.\\
         Path from $x$ to $\rho$ & A backtrack for $x$ \\
         Visible  $x$  & There is a $y\in [n]$ such that $x\in\bigcap_{\beta\in B_\C(y)}L(\beta)$.\\
         Shortcut to $x$  & A backtrack of $x$ that includes a set containing $x$.\\
         Cherry & A set consisting of two elements of $[n]$ \\
         Reticulated cherry & A subset $\{a\}_k$ of $[n]$, and a pair $\{k,b\}$ with $b\in [n]$ \\
         Spine  & Exactly $i$ subsets of $[n+i-1]$, for each $i$\\
         \hline\\
    \end{tabular}
    \caption{A translation of the features of a network with $n$ leaves and $m$ non-root vertices, into the features of the corresponding expanding cover.  
}
    \label{tab:final.dictionary}
\end{table}

Defining a language is not the goal, however, despite it being a necessary step.  The goal is to be able to efficiently work with phylogenetic networks --- computationally, algorithmically, and mathematically --- in order to establish robust methods of inference for networks that will eventually be of practical use for biological researchers.  
To that end, encoding various classes of phylogenetic networks in terms of expanding covers provides an opportunity to make computation more effective and allow their structure to be seen more clearly.

\section{Data Availability}
Data sharing is not applicable to this article as no datasets were generated or analysed during the current study.

\bibliographystyle{plain}

\begin{figure}[ht]
    \centering
    \includegraphics[width =.8 \textwidth]{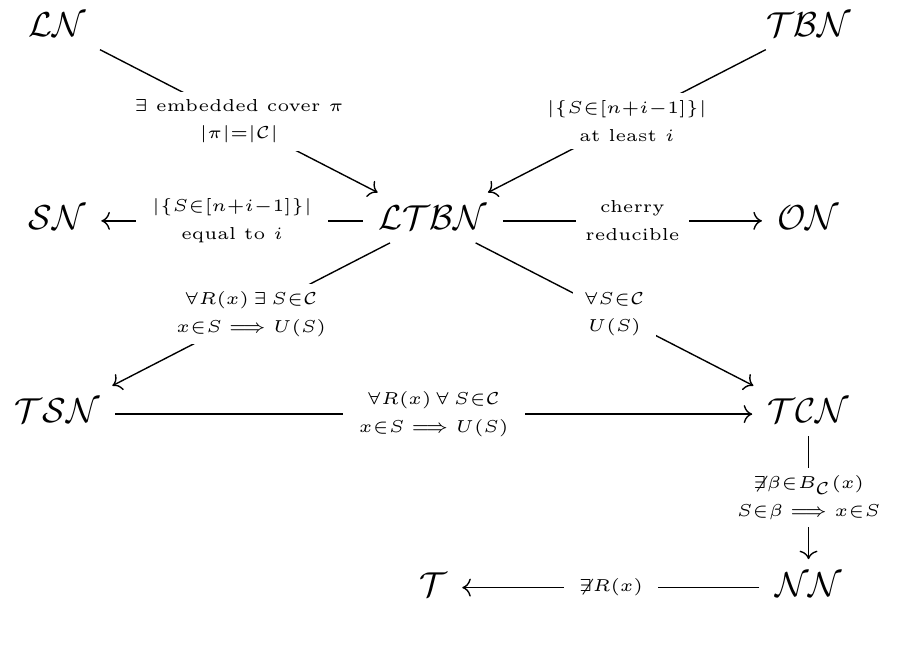}
    \caption{A diagram showing the hierarchy of networks. The nodes are classes of networks, the arrows represent inclusion and the labels indicate which axiom we add to obtain that class. $R(x)$ means ``$x$ is repeated in more than one subset'' and $U(S)$ means ``$S$ contains an unique integer''. The class labels are (from top to bottom): $\mathcal{LN}$ (labellable networks), $\mathcal{TBN}$ (tree-based networks), $\mathcal{LTBN}$ (labellable tree-based networks), $\mathcal{SN}$ (spinal networks), $\mathcal{ON}$ (orchard networks), $\mathcal{TSN}$ (tree-sibling networks), $\mathcal{TCN}$ (tree-child networks), $\mathcal{NN}$ (normal networks), and $\mathcal{T}$ (trees).}
    \label{fig:diagram}
\end{figure}

\end{document}